\documentclass[journal]{IEEEtran}

\usepackage[T1]{fontenc}
\usepackage{amsmath,amssymb,amsthm,mathtools}
\usepackage{booktabs,longtable,array}
\usepackage{placeins}
\usepackage{tikz}
\usetikzlibrary{arrows.meta}
\usepackage{microtype}
\usepackage[hidelinks]{hyperref}

\allowdisplaybreaks[3]
\newtheorem{theorem}{Theorem}
\newtheorem{lemma}{Lemma}
\newtheorem{proposition}{Proposition}
\newtheorem{corollary}{Corollary}
\theoremstyle{definition}
\newtheorem{definition}{Definition}
\newtheorem{example}{Example}
\theoremstyle{remark}
\newtheorem{remark}{Remark}

\author{Yi~Wang and Linglong~Dai, {\textit{Fellow, IEEE}}%
\thanks{This work was supported in part by the National Science Fund for Distinguished Young Scholars under Grant 62325106, in part by the National Science and Technology Major Projects of China under Grant 2025ZD1301800, and in part by the National Key Research and Development Program of China under Grant 2023YFB3811503.}%
\thanks{The authors are with the Department of Electronic Engineering, Tsinghua University, and the State Key Laboratory of Space Network and Communications, Tsinghua University, Beijing 100084, China (e-mails: yiwang24@mails.tsinghua.edu.cn, daill@tsinghua.edu.cn).}}

\title{From Source Reconstruction to Predictive State Preservation: An Information-Theoretic Framework for AI-Native Communication}
\hypersetup{
  pdftitle={From Source Reconstruction to Predictive State Preservation: An Information-Theoretic Framework for AI-Native Communication},
  pdfauthor={Yi Wang and Linglong Dai}
}

\begin{document}

\maketitle

\begin{abstract}
AI-native communication increasingly aims to support prediction rather than reproduce every detail of the source. This shift raises a basic question left implicit by conventional source coding: what should be preserved when the terminal goal is prediction? We take the source-induced predictive state as the fidelity object. It is the distribution of the specified future conditioned on the source observation and shared context. We show that this state is sufficient and minimal for exact predictive preservation. The terminal prediction loss then defines communication distortion as lost predictive performance rather than source reconstruction error. Under logarithmic loss, this distortion equals the conditional mutual information lost through communication. Using the receiver-side predictive state as a Bayes reference, we separate AI-receiver error into predictive value lost in communication, receiver-available value unusable by the model family, and family capability not realized by the deployed model. The same predictive state also suffices for matched compression. For finite-alphabet memoryless sources, access to the raw source gives no rate-distortion advantage over coding the state directly. Applying the same target-conditioned construction to sequential prediction reveals a dynamic boundary. A state induced by a fixed horizon is minimal for that horizon but may not support recursive updating as the target window shifts. Taking the entire future as the target yields a minimal full-future state that updates recursively and admits a Markov representation. Together, these results shift AI communication from source reconstruction to predictive-state preservation.
\end{abstract}

\begin{IEEEkeywords}
Predictive state, AI-native communication, predictive fidelity, rate-distortion.
\end{IEEEkeywords}

\section{Introduction}

AI-native communication increasingly serves terminals that use received information to predict future outcomes. Classical source-coding theory characterizes how efficiently a prescribed fidelity object can be preserved, but it does not determine what that object should be. Conventional source reconstruction takes the source realization itself as the fidelity object. For a predictive terminal, however, distinct source realizations should be treated as equivalent whenever they induce the same predictive distribution for the specified future. The fidelity object for predictive communication must therefore be identified before distortion and coding are defined.

We identify this object as the source-induced predictive state. Let $X$ denote the source observation, $C$ the shared context, and $F$ the specified prediction target. Define
\[
\boxed{
S_X
\triangleq
P_{F\mid X,C}.
}
\]
The state $S_X$ records the full predictive distribution of $F$ available from the source observation and shared context.

If two source realizations $x_1$ and $x_2$ satisfy
\[
P_{F\mid X,C}(\,\cdot\mid x_1,c)
=
P_{F\mid X,C}(\,\cdot\mid x_2,c),
\]
then they are indistinguishable for predicting $F$, even if a reconstruction metric distinguishes them.

Thus $S_X$ realizes the target-conditioned quotient induced by the prediction objective. It is not a learned embedding selected by the system designer, nor must it be transmitted literally as a probability vector. Rather, it specifies the invariant that every representation preserving the full predictive distribution must retain. This recasts prediction-oriented communication from source reconstruction to predictive-state preservation. The remainder of the paper develops the fidelity, receiver, coding, and sequential consequences of this change in fidelity object.

\subsection{Relation to Prior Work}

Classical source coding and indirect source coding assume that the source object or remote target has been specified before the coding problem is posed \cite{shannon1959fidelity,gray1972conditional,wolf_ziv1970remote,witsenhausen1980indirect}. Recent semantic-communication theories move this choice earlier by constructing task-dependent alphabets or equivalence classes and by defining fidelity between semantic distributions \cite{liu_shao_zhang_poor2022semantic,niu_zhang2024semantic,zhao_et_al2025semanticrd,nixon2026bounded}. These works move semantic communication from coding a prescribed source alphabet toward constructing a task-dependent fidelity object.

The use of predictive distributions as sufficient representations is well established in the information-bottleneck and statistical-relevance literature \cite{tishby1999ib,shalizi_crutchfield2002transduction}. Recent work further shows that reducing a source to a sufficient statistic can preserve both information-bottleneck tradeoffs and finite-block log-loss coding performance \cite{armstrong2026sufficiency}. Decision-theoretic work also shows that a coarser representation may suffice when the terminal needs only a particular decision rather than the full predictive distribution \cite{sevetlidis2026bayes}. These results distinguish full predictive-distribution preservation from the weaker preservation of a task-specific Bayes action.

Proper scoring rules evaluate probabilistic predictions and quantify the excess risk incurred by using one predictive distribution in place of another \cite{savage1971elicitation,gneiting_raftery2007scoring}. Under logarithmic loss, this excess risk becomes relative entropy, and the predictive value of side information is characterized by mutual information \cite{courtade_weissman2014logloss,jiao2015logloss}. These results connect probabilistic prediction error to divergence and information loss, providing the loss-based ingredients needed for predictive fidelity.

Predictive $\mathcal V$-information and the Decodable Information Bottleneck show that information present in a representation may still be unusable by a restricted predictor family \cite{xu_et_al2020usable,dubois_et_al2020dib}. Related work has also separated information loss from predictor error, shown that a fixed AI decoder may fail to exploit information already available at the receiver, and studied rate-distortion limits under model constraints \cite{charpentier_fernandes_machado2026scores,billa2026modality,enttsel_corlay2026modelaware}. These results motivate distinguishing receiver-available information from model-usable predictive performance; we retain this distinction through a restricted Bayes-risk gap.

For sequential prediction, causal-state theory identifies the full-future predictive distribution as a minimal sufficient state with recursive and Markov structure \cite{crutchfield_young1989,shalizi_crutchfield2001compmech}. Predictive inference and predictive rate-distortion study the representation and compression of future information \cite{still2014predictive,marzen_crutchfield2016prd}, while recent work considers operational predictive-state communication and the failure of finite predictive summaries to remain closed under updating \cite{ercetin_chraiti2026psc,wang2026fiber}. These results expose a distinction between sufficiency for a fixed prediction target and sufficiency for recursive sequential prediction.

Taken together, these lines of work provide the main ingredients for predictive sufficiency, loss-based fidelity, constrained receiver utilization, matched compression, and recursive predictive state. They are developed, however, around different source objects and operational questions, leaving their relationship implicit within prediction-oriented communication. The present work makes this relationship explicit by using a common target-conditioned predictive state to organize static fidelity, AI reception, compression, and sequential refinement. The resulting technical contributions are summarized next.

\subsection{Contributions and Paper Organization}

The central contribution is to make fidelity-object selection an explicit information-theoretic step and, for exact predictive preservation, to resolve that step through the target-conditioned predictive state. Once this object-selection principle is fixed, its consequences for fidelity, AI reception, compression, and sequential prediction follow. This yields three contributions.

\begin{enumerate}
\item \textbf{Predictive-fidelity object and geometry.}
We take the source-induced predictive state $S_X=P_{F\mid X,C}$ as the fidelity object, thereby grouping source realizations by the future predictions they support rather than by source similarity. We prove that $S_X$ is sufficient and minimal for exact predictive preservation. We then use the terminal prediction loss to define distortion directly on this predictive quotient, so that communication fidelity reflects predictive performance rather than source reconstruction error. Under logarithmic loss, this distortion is exactly the predictive information discarded by communication.

\item \textbf{AI-receiver loss decomposition.} Using the source- and receiver-side predictive states as common Bayes references, we derive an additive decomposition of end-to-end AI-receiver loss into three distinct mechanisms: predictive value lost in communication, value available at the receiver but unattainable by the predictor family, and family capability not realized by the deployed model. This decomposition reveals that identical end-to-end errors can arise from fundamentally different bottlenecks, each with its own zero-loss condition and system-level remedy.

\item \textbf{Compression sufficiency and sequential refinement.} We examine how far the predictive-state reduction continues to hold beyond fidelity definition. Under finite-alphabet memoryless assumptions and matched predictive-state distortion, access to raw-source distinctions within a predictive class provides no rate-distortion advantage over coding the state directly. For sequential prediction, a state minimal for a fixed horizon may fail to update recursively, while taking the entire future as the target yields a minimal recursively closed state with a Markov representation.
\end{enumerate}

Together, these contributions establish predictive-state preservation as a common abstraction for communication fidelity, AI-receiver loss attribution, matched compression, and sequential prediction. Figure~\ref{fig:framework} summarizes the resulting fixed-target framework and its sequential extension.

\begin{figure}[!b]
\centering
\ifCLASSOPTIONtwocolumn
\def\frameworkxunit{1cm}
\def\frameworkyunit{1cm}
\def\frameworkstaticwidth{20mm}
\def\frameworkdynamicwidth{21mm}
\def\frameworkcalloutwidth{32mm}
\def\frameworkstaticheight{14mm}
\else
\def\frameworkxunit{1.25cm}
\def\frameworkyunit{1.12cm}
\def\frameworkstaticwidth{27mm}
\def\frameworkdynamicwidth{26mm}
\def\frameworkcalloutwidth{35mm}
\def\frameworkstaticheight{15mm}
\fi
\begin{tikzpicture}[
x=\frameworkxunit,
y=\frameworkyunit,
static/.style={draw, rounded corners=1pt, align=center, inner sep=2pt, minimum height=\frameworkstaticheight, text width=\frameworkstaticwidth, font=\footnotesize},
dynamic/.style={draw, rounded corners=1pt, align=center, inner sep=2pt, minimum height=10mm, text width=\frameworkdynamicwidth, font=\footnotesize},
callout/.style={draw, rounded corners=1pt, align=center, inner sep=2pt, minimum height=\frameworkstaticheight, text width=\frameworkcalloutwidth, font=\footnotesize},
actual/.style={-{Latex[length=1.5mm]}, semithick},
blocked/.style={dashed, -{Latex[length=1.5mm]}, semithick}
]
\node[anchor=west,font=\footnotesize\bfseries] at (-4.20,0.38) {(a) Fixed-target communication and AI reception};

\node[static] (sx) at (-2.75,-0.82) {source predictive state\\$S_X$\\$H_L(F\mid X,C)$};
\node[static] (sy) at (0.05,-0.82) {receiver predictive state\\$S_Y$\\$H_L(F\mid Y,C)$};
\node[callout] (rd) at (-2.75,-2.82) {\textit{matched compression}\\\mbox{$R_{X\to S}^{\rm ind}(D)=R_S^{\rm dir}(D)$}};
\node[static] (qv) at (0.05,-2.82) {predictor family\\$\mathcal V$\\$H_{L,\mathcal V}(F\mid Y,C)$};
\node[static] (qt) at (2.85,-2.82) {deployed\\$Q_\theta$\\$\mathcal R_L(Q_\theta)$};

\draw[actual] (sx) -- (sy);
\draw[actual] (sy) -- (qv);
\draw[actual] (qv) -- (qt);
\draw[semithick] (sx) -- (rd);
\node[font=\footnotesize] at (-1.35,0.02) {$\Delta_{\rm comm}$};
\node[font=\footnotesize] at (-0.88,-1.84) {$\Delta_{\rm family}^{\mathcal V}$};
\node[font=\footnotesize] at (1.45,-1.84) {$\Delta_{\rm deploy}^{\theta}$};

\draw[thin] (-4.20,-3.75) -- (4.00,-3.75);
\node[anchor=west,font=\footnotesize\bfseries] at (-4.20,-4.12) {(b) Sequential refinement};
\node[font=\footnotesize] at (-0.10,-4.58) {$X=H_t,\quad F=W_{t+1:t+k},\quad S_X=S_t^{(k)}$};

\node[dynamic] (sigma) at (-2.45,-5.55) {$\Sigma_t$\\full-future state};
\node[dynamic] (sigman) at (2.45,-5.55) {$\Sigma_{t+1}$\\full-future state};
\node[dynamic] (sk) at (-2.45,-7.15) {$S_t^{(k)}$\\finite horizon};
\node[dynamic] (skn) at (2.45,-7.15) {$S_{t+1}^{(k)}$\\finite horizon};

\draw[actual] (sigma) -- node[midway,anchor=south,font=\footnotesize] {$\Phi(\,\cdot\,,W_{t+1})$} (sigman);
\draw[actual] (sigma) -- node[midway,anchor=east,font=\footnotesize] {$(\pi_k)_{\#}$} (sk);
\draw[actual] (sigman) -- node[midway,anchor=west,font=\footnotesize] {$(\pi_k)_{\#}$} (skn);
\draw[blocked] (sk) -- (skn);
\node[fill=white,inner sep=1pt,font=\footnotesize] at (0,-7.15) {$\times$};
\node[font=\footnotesize] at (0,-8.02) {not recursively closed in general};
\end{tikzpicture}
\caption{Target-conditioned predictive-fidelity framework. (a) For a fixed prediction target, the source and receiver predictive states provide common references for communication, model-family, and deployment losses, while matched compression acts directly on the predictive quotient. (b) For sequential prediction, the finite-horizon state is a marginal of the full-future state. The latter updates recursively, whereas the former need not be recursively closed.}
\label{fig:framework}
\end{figure}

\FloatBarrier

Section II defines the predictive state for exact preservation and establishes its minimality. Section III develops proper-loss fidelity and its log-loss information interpretation. Section IV develops the AI-receiver loss decomposition. Section V establishes the matched rate-distortion reduction. Section VI develops the sequential refinement and recursive-closure results. Section VII concludes the paper.

\section{Predictive State as the Fidelity Object for Exact Predictive Preservation}

\subsection{Problem Formulation and Assumptions}

Let $(C,X,F)\sim P_{C,X,F}$, where $C\in\mathcal C$ is shared context, $X\in\mathcal X$ is the source observation, and $F\in\mathcal F$ is the specified prediction target, typically a future random element.

We study exact predictive preservation. Given the shared context $C$, the goal is to identify what information in $X$ must be retained to preserve the entire predictive distribution of $F$, before introducing a particular distortion measure or coding scheme.

To ensure that the conditional distributions used below admit measurable versions, we assume that the random elements take values in standard Borel spaces. For every standard Borel space $\mathcal E$, we equip $\mathcal P(\mathcal E)$ with its evaluation $\sigma$-field. We fix one measurable version of each required conditional distribution, and all identities between such kernels are understood almost surely.

For sequential prediction, $F$ may represent a future block. The corresponding time-indexed specialization is introduced in Section~VI.

\subsection{Predictive State}

The first step is to represent what the source observation tells us about the prediction target once the shared context is known. This motivates using the conditional predictive distribution as the state.

\begin{definition}[Predictive state]
\label{def:1}

For a fixed context value $c$, define the predictive-state map

\[
\tau_c:\mathcal X\rightarrow\mathcal P(\mathcal F),
\qquad
\tau_c(x)
\triangleq
P_{F\mid X,C}(\,\cdot\mid x,c),
\]

where $\mathcal P(\mathcal F)$ is the space of probability measures over the target space $\mathcal F$. The source-induced predictive state is

\[
S_X
\triangleq
\tau_C(X)
=
P_{F\mid X,C}(\,\cdot\mid X,C).
\]

More generally, for any information variable $U$ jointly distributed with $(F,C)$, write

\[
S_U
\triangleq
P_{F\mid U,C}(\,\cdot\mid U,C).
\]

The subscript identifies the information variable inducing the state; dependence on the shared context is suppressed. The predictive state is relative to the prediction target, the available context, and the underlying joint distribution. Changing any of these generally changes $S_X$.

\end{definition}

The construction deliberately preserves the entire predictive distribution. If the terminal requires only a particular predictive property or Bayes action, a coarser task-specific representation may suffice \cite{sevetlidis2026bayes}.

\subsection{Predictive Equivalence and the Quotient Source}

To understand what information the predictive state retains from the raw source, we characterize source realizations that induce the same state through predictive equivalence.

\begin{definition}[Predictive equivalence]
\label{def:3}

For a fixed context $c$, two source realizations are predictively equivalent when they induce the same predictive state:
\[
x_1\sim_c x_2
\quad\Longleftrightarrow\quad
\tau_c(x_1)=\tau_c(x_2).
\]
The equality is understood as equality of probability measures on $\mathcal F$. To avoid distinctions caused only by off-support versions of the conditional distribution, interpret this relation on a full-$P_{X\mid C=c}$-measure set $\mathcal X_c$ for $P_C$-almost every $c$. Define
\[
\mathcal S_c
\triangleq
\tau_c(\mathcal X_c)
\subseteq
\mathcal P(\mathcal F)
\]
as the attainable predictive-state space.
\end{definition}

The map $\tau_c$ factors setwise through the predictive quotient:
\[
\mathcal X_c
\xrightarrow{\ \pi_c\ }
\mathcal X_c/{\sim_c}
\xrightarrow{\ \bar\tau_c\ }
\mathcal S_c,
\qquad
\bar\tau_c([x])=\tau_c(x).
\]
By construction, $\bar\tau_c$ is a bijection, so $\mathcal S_c$ provides a distribution-valued realization of the predictive quotient. This statement is set-theoretic; no measurable or topological quotient structure is required for the results below. Thus $\mathcal S_c$ is a concrete representation of the source distinctions that remain relevant to the specified prediction target.

\subsection{Sufficiency and Minimality}

Predictive equivalence so far gives only a partition of the source. To justify the predictive state as the fidelity object, we must establish that it preserves all information needed for the specified prediction and that every admissible representation with the same predictive capability must retain this state.

\begin{definition}[Admissible and predictively sufficient representation]
\label{def:4}
A possibly randomized representation $T$ is admissible if
\[
F-(X,C)-T.
\]
It is predictively sufficient if
\[
F\perp\!\!\!\perp X\mid(T,C).
\]
Thus, once $(T,C)$ is known, the raw observation $X$ provides no further information about the specified prediction target.
\end{definition}

The predictive state satisfies both requirements.

\begin{theorem}[Predictive sufficiency and minimality]
\label{thm:1}
Let
\[
S_X=P_{F\mid X,C}(\,\cdot\mid X,C).
\]
Then:
\begin{enumerate}
\item \textbf{Predictive sufficiency.} The state $S_X$ is an admissible function of $(X,C)$ and satisfies
\[
F\perp\!\!\!\perp X\mid(S_X,C).
\]

\item \textbf{Minimality.} For every admissible and predictively sufficient representation $T$, there exists a measurable map
\[
g:\mathcal T\times\mathcal C\rightarrow\mathcal P(\mathcal F)
\]
such that
\[
S_X=g(T,C)
\qquad\text{almost surely}.
\]
A valid version is
\[
g(t,c)=P_{F\mid T,C}(\,\cdot\mid t,c).
\]
Equivalently,
\[
\sigma(S_X,C)
\subseteq
\sigma(T,C)
\qquad (\bmod\ P).
\]
\end{enumerate}
\end{theorem}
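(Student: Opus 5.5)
We prove both parts by working directly with conditional distributions on standard Borel spaces. Two facts do the work: the tower property of conditional expectation, and the fact that $\mathcal P(\mathcal F)$ with its evaluation $\sigma$-field is generated by the maps $\mu\mapsto\mu(A)$.

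\textbf{Part 1.} Admissibility is immediate. The map $\tau_c(x)$ is jointly measurable in $(x,c)$ because we fixed a measurable kernel version, so $S_X=\tau_C(X)$ is a deterministic measurable function of $(X,C)$, and $F-(X,C)-S_X$ holds trivially.

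For sufficiency, fix a measurable $A\subseteq\mathcal F$. It suffices to show
\[
P(F\in A\mid X,C,S_X)=S_X(A)
\quad\text{and}\quad
P(F\in A\mid S_X,C)=S_X(A).
\]
Since $S_X$ is $\sigma(X,C)$-measurable, the first identity is just the definition of the kernel: $P(F\in A\mid X,C)=S_X(A)$. For the second, apply the tower property with $\sigma(S_X,C)\subseteq\sigma(X,C)$:
\[
P(F\in A\mid S_X,C)=E[S_X(A)\mid S_X,C]=S_X(A),
\]
where the last step uses that $S_X(A)$ is $\sigma(S_X)$-measurable. The conditional law of $F$ given $(X,S_X,C)$ therefore coincides with its law given $(S_X,C)$. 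This holds for a countable generating $\pi$-system of sets $A$, and hence for all $A$, which gives $F\perp\!\!\!\perp X\mid(S_X,C)$.

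\textbf{Part 2.} Let $T$ be admissible and predictively sufficient, and set $g(t,c)=P_{F\mid T,C}(\cdot\mid t,c)$. This is a measurable kernel because the spaces are standard Borel. Predictive sufficiency gives
\[
P(F\in A\mid X,T,C)=P(F\in A\mid T,C)=g(T,C)(A).
\]
Admissibility, $F\perp\!\!\!\perp T\mid(X,C)$, gives
\[
P(F\in A\mid X,T,C)=P(F\in A\mid X,C)=S_X(A).
\]
Hence $S_X(A)=g(T,C)(A)$ almost surely, for each $A$.

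The step requiring care is upgrading these per-set identities to an equality of measures off a single null set. To do this:

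1. Take a countable generating $\pi$-system $\{A_n\}$ of the Borel $\sigma$-field of $\mathcal F$, which exists because $\mathcal F$ is standard Borel.
2. Intersect the countably many full-measure events on which $S_X(A_n)=g(T,C)(A_n)$.
3. On this intersection, the two probability measures agree on $\{A_n\}$, so the $\pi$-$\lambda$ theorem gives $S_X=g(T,C)$ as elements of $\mathcal P(\mathcal F)$.

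Thus $S_X=g(T,C)$ almost surely. Measurability of $g$ into $\mathcal P(\mathcal F)$ with the evaluation $\sigma$-field follows because each evaluation $t,c\mapsto g(t,c)(A)$ is measurable. Consequently $S_X$ agrees almost surely with a $\sigma(T,C)$-measurable random element, and together with $C$ itself this yields $\sigma(S_X,C)\subseteq\sigma(T,C)$ modulo $P$.

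If $T$ is randomized, the same argument applies verbatim: we only use its joint law with $(X,C,F)$, treating the randomization as part of the probability space.
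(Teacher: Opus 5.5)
Your proposal is correct and follows the paper's proof essentially step for step. Sufficiency comes from the tower property with $\sigma(S_X,C)\subseteq\sigma(X,C)$, and minimality comes from the chain $P_{F\mid X,C}=P_{F\mid X,T,C}=P_{F\mid T,C}$, which combines admissibility and sufficiency. Your explicit countable $\pi$-system argument, which upgrades the per-set identities $S_X(A)=g(T,C)(A)$ to a single almost-sure equality of measures, fills in a step the paper leaves implicit in Part 2.
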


\begin{proof}
Because $\mathcal F$ is standard Borel, let $\mathcal A_0$ be a countable determining class. For each $A\in\mathcal A_0$,
\[
P(F\in A\mid X,C)=S_X(A).
\]
The tower property gives
\[
P(F\in A\mid S_X,C)
=
\mathbb E[S_X(A)\mid S_X,C]
=
S_X(A).
\]
Since $S_X$ is determined by $(X,C)$,
\[
P(F\in A\mid X,S_X,C)
=S_X(A).
\]
The determining property of $\mathcal A_0$ therefore yields
\[
F\perp\!\!\!\perp X\mid(S_X,C).
\]

Now let $T$ be admissible and predictively sufficient. Admissibility and sufficiency imply, almost surely,
\[
P_{F\mid X,C}
=
P_{F\mid X,T,C}
=
P_{F\mid T,C}.
\]
Hence, with
\[
g(t,c)=P_{F\mid T,C}(\,\cdot\mid t,c),
\]
we obtain
\[
S_X=g(T,C),
\]
which proves the factorization and the corresponding inclusion of generated $\sigma$-fields.
\end{proof}

\begin{remark}[Meaning of minimality]
Minimality here refers to information content rather than representation size: every admissible predictively sufficient representation must contain enough information to recover $S_X$, although it may retain additional source information. Any other admissible predictively sufficient representation that is minimal in the same factor order is therefore equivalent to $S_X$ up to almost-sure measurable relabeling.
\end{remark}

\subsection{Reconstruction Boundary and Receiver State}

The predictive formulation should recover classical source identity when the prediction target itself is the source. Choosing $F=X$ gives

\[
S_X=P_{X\mid X,C}=\delta_X,
\qquad
x_1\sim_c x_2
\Longleftrightarrow
x_1=x_2.
\]

Thus the predictive quotient reduces to the original source space.

For a genuine future target, the map
\[
x\longmapsto P_{F\mid X,C}(\,\cdot\mid x,c)
\]
may instead be noninjective, allowing source realizations with identical predictive consequences to be merged. Thus the framework recovers exact source reconstruction when prediction requires source identity, while permitting a coarser fidelity object when it does not. This correspondence is with exact source reconstruction. Arbitrary lossy reconstruction metrics remain separately specified fidelity criteria.

Having established the source-side object and its reconstruction boundary, we now introduce the corresponding receiver-side state. Let $Y$ denote receiver-side information generated from $(X,C)$, so that
\[
F-(X,C)-Y.
\]
For every measurable $A\subseteq\mathcal F$,
\[
\begin{aligned}
S_Y(A)
&=
P(F\in A\mid Y,C)\\
&=
\mathbb E\!\left[P(F\in A\mid X,C)\mid Y,C\right]\\
&=
\mathbb E[S_X(A)\mid Y,C].
\end{aligned}
\]
Thus communication replaces the source-side predictive state by the conditional barycenter of the states still compatible with the received information. The mapping
\[
S_X\longrightarrow S_Y
\]
therefore provides the static predictive-fidelity interface between source and receiver.

\section{Predictive Fidelity and Information Accounting}

Section~II identifies the source and receiver predictive states and the static interface $S_X\to S_Y$. We now ask how predictive loss along this interface should be measured. We derive distortion from the terminal prediction loss and then show that, under logarithmic loss, the resulting fidelity admits an exact information-theoretic interpretation.

\subsection{Loss-Induced Fidelity on the Predictive Quotient}

We use the excess prediction risk induced by a proper probabilistic prediction loss as a fidelity measure between predictive states. Here proper means that reporting the true predictive distribution minimizes expected loss.

Let $\mathcal Q\subseteq\mathcal P(\mathcal F)$ be a class of admissible predictive reports containing the predictive states considered below. Equip $\mathcal Q$ with the trace $\sigma$-field inherited from $\mathcal P(\mathcal F)$, and let
\[
L:\mathcal Q\times\mathcal F\to\overline{\mathbb R}
\]
be a measurable terminal loss, where $L(Q,f)$ is the probabilistic prediction loss incurred by reporting $Q\in\mathcal Q$ when the realized target is $f$. For truth $P\in\mathcal Q$ and report $Q\in\mathcal Q$, define
\[
R_L(P,Q)
\triangleq
\mathbb E_{F\sim P}[L(Q,F)].
\]
We call $L$ proper on $\mathcal Q$ if truthful reporting satisfies
\[
R_L(P,P)\leq R_L(P,Q)
\]
for all $P,Q\in\mathcal Q$. It is strictly proper if equality can hold only when $Q=P$. Its Bayes risk is
\[
H_L(P)\triangleq R_L(P,P),
\]
Whenever the difference is well defined, define its loss-induced divergence, formally the regret divergence, by
\[
D_L(P\|Q)
\triangleq
R_L(P,Q)-R_L(P,P).
\]
Thus $D_L(P\|Q)$ is exactly the excess terminal prediction risk incurred by reporting $Q$ when the true predictive distribution is $P$. Under properness it is nonnegative, and under strict properness it vanishes only when $P=Q$. It is directed and need not be a metric.

For any information variable $U$ such that
\[
S_U=P_{F\mid U,C}(\,\cdot\mid U,C)\in\mathcal Q
\]
almost surely, define the conditional Bayes risk
\[
H_L(F\mid U,C)
\triangleq
\mathbb E[H_L(S_U)].
\]
This is the Bayes-optimal expected terminal risk when $(U,C)$ is available.

When $S_Y$ is admissible, it is Bayes-optimal among predictive reports based on $(Y,C)$. The distortion induced by communication should therefore equal the Bayes-risk increase from source-side to receiver-side information. The following identity makes this exact.

\begin{theorem}[Proper-loss communication identity]
\label{thm:2}
Let $Y$ satisfy $F-(X,C)-Y$. Let $L$ be proper on $\mathcal Q$, assume that $S_X,S_Y\in\mathcal Q$ almost surely, and suppose that all displayed expectations and risk differences are well defined. Then
\[
\boxed{
\mathbb E[D_L(S_X\|S_Y)]
=
H_L(F\mid Y,C)-H_L(F\mid X,C)
\geq 0.
}
\]
If $L$ is strictly proper, the common value is zero if and only if
\[
S_X=S_Y
\qquad\text{almost surely},
\]
or equivalently,
\[
F\perp\!\!\!\perp X\mid(Y,C).
\]
\end{theorem}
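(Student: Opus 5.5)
The plan is to prove the identity by comparing $\mathbb E[R_L(S_X,S_Y)]$ with $H_L(F\mid Y,C)$. Both reduce to the expected terminal loss of reporting $S_Y$.

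The key observation is that $S_Y$ is $\sigma(Y,C)$-measurable. Since $Y$ is admissible, $\sigma(Y,C)\subseteq\sigma(X,C,Y)$. Write $Q=S_Y$. By the Markov chain $F-(X,C)-Y$, the conditional law of $F$ given $(X,C,Y)$ equals $S_X$. Conditioning first on $(X,C,Y)$, where $Q$ is fixed, gives
\[
\mathbb E[L(S_Y,F)]=\mathbb E\big[\mathbb E[L(S_Y,F)\mid X,C,Y]\big]=\mathbb E[R_L(S_X,S_Y)].
\]
Conditioning instead on $(Y,C)$, under which $F$ has law $S_Y$, gives
\[
\mathbb E[L(S_Y,F)]=\mathbb E[R_L(S_Y,S_Y)]=\mathbb E[H_L(S_Y)]=H_L(F\mid Y,C).
\]
Similarly, $\mathbb E[R_L(S_X,S_X)]=H_L(F\mid X,C)$ by definition. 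Subtracting these expressions yields the boxed identity. Nonnegativity then follows pointwise from properness, since $D_L(S_X\|S_Y)\ge0$.

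The main technical step is justifying the disintegration $\mathbb E[\phi(Q,F)\mid\mathcal G]=\int\phi(Q,f)\,P_{F\mid\mathcal G}(df)$ for a $\mathcal G$-measurable report $Q$ and jointly measurable $\phi=L$. I would prove this first for product indicators $1_B(Q)1_A(F)$. A monotone class argument extends it to bounded measurable $\phi$, using the evaluation $\sigma$-field on $\mathcal Q$ and the standard Borel assumption. Finally, truncation combined with the well-definedness hypothesis extends it to extended-real $L$, for example by splitting into positive and negative parts.

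For the equality case under strict properness, the expectation of a nonnegative quantity vanishes only if $D_L(S_X\|S_Y)=0$ almost surely. Strict properness then gives $S_X=S_Y$ almost surely. Conversely, $S_X=S_Y$ almost surely makes every term zero.

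For the equivalence with $F\perp\!\!\!\perp X\mid(Y,C)$, I would argue in both directions. Suppose first that $S_X=S_Y$ almost surely. Then $P(F\in A\mid X,Y,C)=S_X(A)=S_Y(A)=P(F\in A\mid Y,C)$ for each $A$ in a countable determining class, using the Markov chain for the first equality. This is conditional independence. Conversely, conditional independence together with the Markov chain gives $P_{F\mid X,C}=P_{F\mid X,Y,C}=P_{F\mid Y,C}$, so $S_X=S_Y$ almost surely. This mirrors the minimality argument in Theorem~\ref{thm:1}.
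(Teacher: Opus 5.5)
Your proposal is correct and follows essentially the same route as the paper: condition on $(X,Y,C)$, where the Markov chain makes the target law $S_X$, to identify $\mathbb E[D_L(S_X\|S_Y)]$ with $\mathbb E[L(S_Y,F)]-\mathbb E[L(S_X,F)]$; then recognize the two terms as the Bayes risks; then use strict properness for the equality case. The differences are only in detail. You spell out the disintegration for a $\mathcal G$-measurable report and both directions of the equivalence with $F\perp\!\!\!\perp X\mid(Y,C)$, which the paper leaves implicit. Your appeal to admissibility for $\sigma(Y,C)\subseteq\sigma(X,C,Y)$ is unnecessary, since that inclusion holds trivially.
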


\begin{proof}
Conditioned on $(X,Y,C)$, the target distribution is $S_X$. Therefore
\[
\mathbb E[D_L(S_X\|S_Y)]
=
\mathbb E[L(S_Y,F)]-\mathbb E[L(S_X,F)].
\]
Because $S_X$ and $S_Y$ are the Bayes reports associated with $(X,C)$ and $(Y,C)$, respectively,
\[
\begin{aligned}
\mathbb E[L(S_X,F)]&=H_L(F\mid X,C),\\
\mathbb E[L(S_Y,F)]&=H_L(F\mid Y,C),
\end{aligned}
\]
which proves the identity and its nonnegativity.

Under strict properness, $D_L(S_X\|S_Y)=0$ exactly when $S_X=S_Y$. Since the divergence is nonnegative, its expectation vanishes exactly when this equality holds almost surely. Under the Markov relation, this is equivalent to
\[
F\perp\!\!\!\perp X\mid(Y,C).
\]
\end{proof}

Theorem~\ref{thm:2} completes the transition from fidelity-object selection to fidelity measurement. The prediction target determines which source distinctions remain relevant, while the terminal loss determines the cost of confusing the resulting predictive states. Different proper losses may therefore induce different fidelities on the same predictive quotient, with relative entropy arising specifically under logarithmic loss.

\subsection{Log-Loss Information Accounting}

For a general proper loss, predictive distortion is measured as excess terminal prediction risk. Logarithmic loss is distinguished by the fact that its induced divergence is relative entropy. We now show that this same predictive-fidelity quantity admits an exact Shannon-theoretic representation at the communication level.

Assume that the report class $\mathcal Q$ is dominated by a fixed reference measure $\mu$ on $\mathcal F$, and write $q=dQ/d\mu$ for $Q\in\mathcal Q$. For $P,Q\in\mathcal Q$, with $p=dP/d\mu$, logarithmic loss
\[
L_{\log}(Q,f)=-\log q(f)
\]
induces
\[
D_{L_{\log}}(P\|Q)
=
\mathbb E_{F\sim P}\!\left[\log\frac{p(F)}{q(F)}\right]
=
D_{\mathrm{KL}}(P\|Q).
\]
The divergence takes the extended value $+\infty$ when $P$ is not absolutely continuous with respect to $Q$.

Thus, under logarithmic loss, predictive-state fidelity is measured by relative entropy. The next result shows that, after averaging over the communication system, this same quantity is exactly the conditional mutual information about the prediction target that is unavailable from the receiver-side information.

For receiver information $Y$, define the expected log-loss predictive distortion

\[
D_p(X\to Y\mid C)
\triangleq
\mathbb E\!\left[
D_{\mathrm{KL}}\!\left(
P_{F\mid X,C}
\middle\|
P_{F\mid Y,C}
\right)
\right].
\]

\begin{theorem}[Log-loss predictive-information identity]
\label{thm:3}

Let $Y$ satisfy
\[
F-(X,C)-Y.
\]

Whenever the conditional relative entropies are well-defined,

\[
\boxed{
D_p(X\to Y\mid C)
=
I(F;X\mid Y,C).
}
\]

If the mutual-information terms are finite, then

\[
\boxed{
I(F;X\mid C)
=
I(F;Y\mid C)
+
D_p(X\to Y\mid C).
}
\]

\end{theorem}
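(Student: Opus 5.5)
The plan is to prove the first identity pointwise in the conditioning variables, using the Markov chain to rewrite $P_{F\mid X,C}$ as $P_{F\mid X,Y,C}$, and then to get the chain-rule decomposition by adding $I(F;Y\mid C)$ to both sides.

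\textbf{Step 1 (Markov chain).} From $F-(X,C)-Y$ we have $P_{F\mid X,Y,C}=P_{F\mid X,C}=S_X$ almost surely, as already used in the proof of Theorem~\ref{thm:2}. Hence
\[
D_p(X\to Y\mid C)=\mathbb E\!\left[D_{\mathrm{KL}}\!\left(P_{F\mid X,Y,C}\,\middle\|\,P_{F\mid Y,C}\right)\right].
\]

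\textbf{Step 2 (definition of conditional mutual information).} For standard Borel spaces, the general (Dobrushin/Pinsker) definition of conditional mutual information gives
\[
I(F;X\mid Y,C)=\mathbb E_{(X,Y,C)}\!\left[D_{\mathrm{KL}}\!\left(P_{F\mid X,Y,C}\,\middle\|\,P_{F\mid Y,C}\right)\right].
\]
I would state this as the kernel form of conditional mutual information, valid with the convention that the integrand may equal $+\infty$. Combining Steps 1 and 2 gives the first boxed identity directly.

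As a cross-check, consider the special case where everything has densities with respect to $\mu$ and the log-loss Bayes risks are finite. Theorem~\ref{thm:2} with $L=L_{\log}$ then gives
\[
D_p=H_{L_{\log}}(F\mid Y,C)-H_{L_{\log}}(F\mid X,C).
\]
Using the Markov chain, this equals $h(F\mid Y,C)-h(F\mid X,Y,C)=I(F;X\mid Y,C)$. I would not rely on this route in general, since conditional entropies may be infinite or ill defined for non-discrete $\mathcal F$.

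\textbf{Step 3 (chain rule).} Apply the chain rule for conditional mutual information, which holds in general measurable settings:
\[
I(F;X,Y\mid C)=I(F;Y\mid C)+I(F;X\mid Y,C)=I(F;X\mid C)+I(F;Y\mid X,C).
\]
The Markov chain gives $I(F;Y\mid X,C)=0$. Equating the two expansions and substituting the first identity yields the second boxed identity. The finiteness assumption is what permits the rearrangement and rules out $\infty-\infty$.

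\textbf{Main obstacle.} The only delicate point is measure-theoretic: identifying the averaged relative entropy with conditional mutual information, and using the chain rule, when $\mathcal F$ is not discrete. Infinite values are also possible, since $S_X$ need not be absolutely continuous with respect to $S_Y$. To handle this, I would invoke the standard Borel assumption so that regular conditional kernels exist, and use Gelfand–Yaglom–Perez/Pinsker results expressing mutual information as an expected divergence between kernels. I would note that the first identity holds in $[0,\infty]$ without finiteness assumptions, and that finiteness is needed only for the subtraction in the second identity.
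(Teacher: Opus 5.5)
Your proposal is correct and follows essentially the same route as the paper. The first identity comes from the conditional-relative-entropy representation of $I(F;X\mid Y,C)$ together with the Markov substitution $P_{F\mid X,Y,C}=P_{F\mid X,C}$, and the second comes from the chain rule using $I(F;Y\mid X,C)=0$, equivalently $I(F;X,Y\mid C)=I(F;X\mid C)$.
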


\begin{proof}

The conditional-relative-entropy representation gives

\[
I(F;X\mid Y,C)
=
\mathbb E\!\left[
D_{\mathrm{KL}}\!\left(
P_{F\mid X,Y,C}
\middle\|
P_{F\mid Y,C}
\right)
\right].
\]

The Markov relation replaces $P_{F\mid X,Y,C}$ by $P_{F\mid X,C}$, proving the first identity. For the second identity, the chain rule and $F-(X,C)-Y$ give

\[
\begin{aligned}
I(F;X\mid C)
&=I(F;X,Y\mid C)\\
&=I(F;Y\mid C)+I(F;X\mid Y,C)\\
&=I(F;Y\mid C)+D_p(X\to Y\mid C).
\end{aligned}
\]
This completes the proof.
\end{proof}

When the relevant conditional entropies are well-defined and finite, the same quantity can also be written as
\[
D_p(X\to Y\mid C)
=
H(F\mid Y,C)-H(F\mid X,C).
\]
More importantly, predictive sufficiency of $S_X$ gives
\[
I(F;S_X\mid C)=I(F;X\mid C).
\]
Thus quotienting the raw source to its predictive state removes no information about the selected target. Under logarithmic loss, the communication-induced distortion $D_p(X\to Y\mid C)$ measures precisely the target information unavailable after the transition
\[
S_X\longrightarrow S_Y.
\]
Accordingly,
\[
D_p(X\to Y\mid C)=0
\]
if and only if the receiver is predictively sufficient, equivalently $S_X=S_Y$ almost surely, or $F\perp\!\!\!\perp X\mid(Y,C)$.

\section{AI-Receiver Predictive Fidelity}
\label{sec:ai-receiver}

Sections~II--III characterize the predictive content delivered to the receiver through the transition from $S_X$ to $S_Y$. This Bayes-level description is not yet the performance of a deployed AI receiver: the receiver may be unable to realize the Bayes-optimal prediction within its model family, and the deployed model may further fall short of the family optimum. We now separate these additional sources of predictive loss.

\subsection{Constrained Predictor Family and Deployed Model}

For a proper prediction loss, the receiver-side state
\[
S_Y=P_{F\mid Y,C}(\cdot\mid Y,C)
\]
is the Bayes-optimal report based on the complete receiver information $(Y,C)$, provided that it belongs to the admissible report class $\mathcal Q$. We use this ideal receiver as the reference point for separating model-family limitation from deployment suboptimality.

Let $\mathcal V$ be a nonempty class of measurable predictive kernels
\[
Q:(y,c)\longmapsto Q(\cdot\mid y,c)\in\mathcal Q,
\]
where $\mathcal V$ may encode architectural, computational, interface, or decoding restrictions. Let $Q_\theta\in\mathcal V$ denote the deployed predictor. For $Q\in\mathcal V$, define its terminal risk by
\[
\mathcal R_L(Q)
\triangleq
\mathbb E\!\left[
L\!\left(Q(\cdot\mid Y,C),F\right)
\right],
\]
and define the optimal risk attainable within the family by
\[
H_{L,\mathcal V}(F\mid Y,C)
\triangleq
\inf_{Q\in\mathcal V}\mathcal R_L(Q).
\]
When the infimum is attained, let
\[
Q_{\mathcal V}^{\star}
\in
\operatorname*{arg\,min}_{Q\in\mathcal V}\mathcal R_L(Q)
\]
denote a family-optimal predictor.

\subsection{Communication, Family, and Deployment Gaps}

Relative to the source-informed Bayes predictor, end-to-end AI-receiver loss can arise at three successive stages: predictive value may be lost before reception, the received information may not be fully exploitable by the predictor family, and the deployed model may fail to realize the best performance available within that family. The following proposition shows that these three effects admit an exact additive decomposition.

\begin{proposition}[AI-receiver predictive-fidelity decomposition]
\label{prop:ai-receiver-decomposition}

Let
\[
F-(X,C)-Y,
\]
let $L$ be proper on $\mathcal Q$, and suppose that
\[
S_X,S_Y\in\mathcal Q
\]
almost surely and that
\[
Q(\cdot\mid Y,C)\in\mathcal Q
\]
almost surely for every $Q\in\mathcal V$. Assume that all displayed risks are finite. Then
\[
\begin{aligned}
H_L(F\mid X,C)
&\leq H_L(F\mid Y,C)\\
&\leq H_{L,\mathcal V}(F\mid Y,C)\\
&\leq \mathcal R_L(Q_\theta).
\end{aligned}
\]
Define
\[
\begin{aligned}
\Delta_{\rm comm}
&\triangleq H_L(F\mid Y,C)-H_L(F\mid X,C)
=\mathbb E D_L(S_X\|S_Y),\\
\Delta_{\rm family}^{\mathcal V}
&\triangleq H_{L,\mathcal V}(F\mid Y,C)-H_L(F\mid Y,C)\\
&=\inf_{Q\in\mathcal V}
\mathbb E D_L\!\left(S_Y\middle\|Q(\cdot\mid Y,C)\right),\\
\Delta_{\rm deploy}^{\theta}
&\triangleq \mathcal R_L(Q_\theta)-H_{L,\mathcal V}(F\mid Y,C).
\end{aligned}
\]
All three gaps are nonnegative, and
\[
\boxed{
\mathbb E D_L\!\left(
S_X\middle\|Q_\theta(\cdot\mid Y,C)
\right)
=
\Delta_{\rm comm}
+
\Delta_{\rm family}^{\mathcal V}
+
\Delta_{\rm deploy}^{\theta}.
}
\]
\end{proposition}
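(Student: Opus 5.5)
The plan is to reduce everything to one conditioning identity: for any kernel $Q\in\mathcal V$, the risk $\mathcal R_L(Q)$ can be written as an expectation of $R_L$ against either the source state or the receiver state. Since $F-(X,C)-Y$ gives $P_{F\mid X,Y,C}=S_X$, conditioning on $(X,Y,C)$ yields
\[
\mathcal R_L(Q)=\mathbb E\!\left[R_L\!\left(S_X,Q(\cdot\mid Y,C)\right)\right].
\]
Conditioning on $(Y,C)$ instead, where the target law is $S_Y$ and the report $Q(\cdot\mid Y,C)$ is $(Y,C)$-measurable, gives
\[
\mathcal R_L(Q)=\mathbb E\!\left[R_L\!\left(S_Y,Q(\cdot\mid Y,C)\right)\right].
\]
Both steps use the fixed measurable versions of the kernels from Section~II. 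Finiteness of the displayed risks lets us split expectations of differences freely.

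For the chain of inequalities, the first one and the identity $\Delta_{\rm comm}=\mathbb E D_L(S_X\|S_Y)$ are exactly Theorem~\ref{thm:2}, so I would cite it. The third inequality is immediate, since $Q_\theta\in\mathcal V$ and $H_{L,\mathcal V}$ is an infimum over $\mathcal V$; this gives $\Delta_{\rm deploy}^\theta\ge 0$. For the second inequality and the representation of $\Delta_{\rm family}^{\mathcal V}$, I would subtract $H_L(F\mid Y,C)=\mathbb E[R_L(S_Y,S_Y)]$ from the second display. For each $Q\in\mathcal V$ this gives
\[
\mathcal R_L(Q)-H_L(F\mid Y,C)=\mathbb E\, D_L\!\left(S_Y\middle\|Q(\cdot\mid Y,C)\right)\ge 0.
\]
Nonnegativity follows from properness pointwise, because $S_Y$ and $Q(\cdot\mid Y,C)$ lie in $\mathcal Q$ almost surely. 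Taking the infimum over $Q\in\mathcal V$ gives both the formula for $\Delta_{\rm family}^{\mathcal V}$ and the second inequality, since $H_L(F\mid Y,C)$ is a constant.

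For the boxed decomposition, I would apply the first display to $Q_\theta$ and subtract $H_L(F\mid X,C)=\mathbb E[R_L(S_X,S_X)]$. This gives
\[
\mathbb E\, D_L\!\left(S_X\middle\|Q_\theta(\cdot\mid Y,C)\right)=\mathcal R_L(Q_\theta)-H_L(F\mid X,C).
\]
The right side then telescopes through $H_L(F\mid Y,C)$ and $H_{L,\mathcal V}(F\mid Y,C)$ into $\Delta_{\rm comm}+\Delta_{\rm family}^{\mathcal V}+\Delta_{\rm deploy}^\theta$. Each intermediate quantity is finite by assumption, so the telescoping is legitimate.

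The main obstacle is justifying the two conditioning identities rigorously, i.e.\ that $\mathbb E[L(Q(\cdot\mid Y,C),F)\mid X,Y,C]=R_L(S_X,Q(\cdot\mid Y,C))$ and its $(Y,C)$ analogue. I would handle this with the disintegration theorem for the jointly measurable map $(q,f)\mapsto L(q,f)$ composed with the measurable report. Here the standard Borel assumption and the evaluation $\sigma$-field on $\mathcal P(\mathcal F)$ guarantee the required measurability, and the finite-risk assumption ensures that the integrals are defined. This is the same step implicitly used in the proof of Theorem~\ref{thm:2}, so I would invoke it in the same way.
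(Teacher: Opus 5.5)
Your proposal is correct and follows essentially the same route as the paper. You condition on $(Y,C)$ to write $\mathcal R_L(Q)=H_L(F\mid Y,C)+\mathbb E D_L(S_Y\|Q(\cdot\mid Y,C))$, take the infimum over $\mathcal V$ and use $Q_\theta\in\mathcal V$, invoke Theorem~\ref{thm:2} for the communication gap, and telescope the endpoint identity $\mathbb E D_L(S_X\|Q_\theta(\cdot\mid Y,C))=\mathcal R_L(Q_\theta)-H_L(F\mid X,C)$; the only addition is that you justify this endpoint identity by conditioning on $(X,Y,C)$ through the Markov chain, which the paper states without comment.
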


\begin{proof}
For any $Q\in\mathcal V$, conditioning on $(Y,C)$ gives
\[
\mathcal R_L(Q)
=
H_L(F\mid Y,C)
+
\mathbb E\!\left[
D_L\!\left(
S_Y
\middle\|
Q(\cdot\mid Y,C)
\right)
\right].
\]
Taking the infimum over $\mathcal V$ yields the family gap, while $Q_\theta\in\mathcal V$ gives the deployment gap.

By Theorem~\ref{thm:2},
\[
H_L(F\mid Y,C)-H_L(F\mid X,C)
=
\mathbb E D_L(S_X\|S_Y),
\]
which gives the communication gap. Moreover,
\[
\mathbb E D_L\!\left(S_X\middle\|Q_\theta(\cdot\mid Y,C)\right)
=
\mathcal R_L(Q_\theta)-H_L(F\mid X,C).
\]
Inserting the receiver Bayes risk and the family-optimal risk between these endpoint risks gives the stated decomposition.
\end{proof}

The three gaps therefore have distinct operational meanings: $\Delta_{\rm comm}$ measures predictive value lost before reception, $\Delta_{\rm family}^{\mathcal V}$ measures receiver-available value that the predictor family cannot exploit, and $\Delta_{\rm deploy}^{\theta}$ measures family capability not realized by the deployed model. Only the first term has a universal Shannon-theoretic representation under logarithmic loss.

\subsection{Losslessness and Log-Loss Specialization}

The decomposition separates not only where predictive loss occurs, but also what would be required to eliminate it. It also clarifies which part of end-to-end AI-receiver loss has a Shannon-information interpretation. We consider these two consequences in turn.

\begin{corollary}[End-to-end AI-receiver losslessness]
\label{cor:ai-receiver-losslessness}

Under Proposition~\ref{prop:ai-receiver-decomposition}, the deployed AI receiver attains the source-informed Bayes risk,
\[
\mathcal R_L(Q_\theta)=H_L(F\mid X,C),
\]
or, equivalently,
\[
\mathbb E D_L\!\left(
S_X\middle\|Q_\theta(\cdot\mid Y,C)
\right)=0,
\]
if and only if
\[
\Delta_{\rm comm}
=
\Delta_{\rm family}^{\mathcal V}
=
\Delta_{\rm deploy}^{\theta}
=
0.
\]

If $L$ is strictly proper, the communication condition is equivalent to
\[
S_X=S_Y
\qquad\text{almost surely},
\]
or, under $F-(X,C)-Y$,
\[
F\perp\!\!\!\perp X\mid(Y,C).
\]
If, in addition, the family infimum is attained, then
\[
\Delta_{\rm family}^{\mathcal V}=0
\quad\Longleftrightarrow\quad
Q_{\mathcal V}^{\star}(\cdot\mid Y,C)=S_Y
\quad\text{almost surely},
\]
and
\[
\Delta_{\rm deploy}^{\theta}=0
\quad\Longleftrightarrow\quad
Q_\theta
\in
\operatorname*{arg\,min}_{Q\in\mathcal V}\mathcal R_L(Q).
\]

Thus, under strict properness and attainment, AI-receiver losslessness requires all three of the following:
\[
\begin{array}{ll}
\text{(i)}&
\text{the received information is predictively sufficient;}\\
\text{(ii)}&
\text{the receiver-side Bayes state is realizable by }\mathcal V;\\
\text{(iii)}&
\text{the deployed predictor attains the family optimum.}
\end{array}
\]
\end{corollary}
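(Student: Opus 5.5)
The plan is to read everything off the additive identity and the nonnegativity established in Proposition~\ref{prop:ai-receiver-decomposition}. Since all risks are finite, the left-hand side $\mathbb E D_L(S_X\|Q_\theta(\cdot\mid Y,C))$ equals $\mathcal R_L(Q_\theta)-H_L(F\mid X,C)$, as shown in the proof of the proposition. This gives the equivalence between $\mathcal R_L(Q_\theta)=H_L(F\mid X,C)$ and the vanishing of the expected divergence. The identity then writes this quantity as $\Delta_{\rm comm}+\Delta_{\rm family}^{\mathcal V}+\Delta_{\rm deploy}^{\theta}$, and each term is nonnegative and finite. A sum of finite nonnegative reals is zero if and only if every term is zero, which proves the main equivalence.

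For the communication condition under strict properness, I will invoke Theorem~\ref{thm:2} directly. It states that $\Delta_{\rm comm}=\mathbb E D_L(S_X\|S_Y)$ vanishes exactly when $S_X=S_Y$ almost surely, equivalently when $F\perp\!\!\!\perp X\mid(Y,C)$ under the Markov chain. No new argument is needed here.

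For the family condition with attainment, I will use the conditioning identity from the proof of the proposition, $\mathcal R_L(Q)=H_L(F\mid Y,C)+\mathbb E D_L(S_Y\|Q(\cdot\mid Y,C))$. Applied to $Q=Q_{\mathcal V}^\star$, it gives $\Delta_{\rm family}^{\mathcal V}=\mathbb E D_L(S_Y\|Q_{\mathcal V}^\star(\cdot\mid Y,C))$. Under strict properness the integrand is nonnegative and vanishes only when $Q_{\mathcal V}^\star(\cdot\mid Y,C)=S_Y$. Its expectation is therefore zero if and only if this equality holds almost surely; this is the same step used in Theorem~\ref{thm:2}.

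The deployment condition is immediate. $\Delta_{\rm deploy}^{\theta}=0$ means $\mathcal R_L(Q_\theta)=\inf_{\mathcal V}\mathcal R_L$, and since $Q_\theta\in\mathcal V$ this says exactly that $Q_\theta$ lies in the argmin. Conditions (i)--(iii) are just restatements of these three equivalences.

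The only delicate point is the family step. I must make sure that the argmin predictor $Q_{\mathcal V}^\star$ may be any minimizer, since the statement asserts the equivalence for a chosen one. I also need to check that finiteness of the risks keeps the subtraction well defined. Both are covered by the proposition's hypotheses, so I expect no real obstacle.
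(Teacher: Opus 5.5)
Your proposal is correct and follows essentially the same route as the paper. The main equivalence comes from the additive decomposition in Proposition~\ref{prop:ai-receiver-decomposition} with finite nonnegative gaps; the communication and family conditions come from strict properness applied to $\mathbb E D_L(S_X\|S_Y)$ (via Theorem~\ref{thm:2}) and to $\mathbb E D_L(S_Y\|Q_{\mathcal V}^{\star}(\cdot\mid Y,C))$, and the deployment condition follows directly from the definition of the infimum. Your use of the conditioning identity in the family step, which holds for every minimizer, simply spells out what the paper states in one line.
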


\begin{proof}
By Proposition~\ref{prop:ai-receiver-decomposition},
\[
\mathcal R_L(Q_\theta)-H_L(F\mid X,C)
=
\Delta_{\rm comm}
+
\Delta_{\rm family}^{\mathcal V}
+
\Delta_{\rm deploy}^{\theta},
\]
where all three terms are nonnegative. Hence the endpoint risks are equal if and only if every gap vanishes.

Under strict properness, zero expected loss-induced divergence implies equality of the corresponding predictive reports almost surely. This gives the communication statement and, when a family minimizer exists, the realizability statement. The deployment condition follows directly from the definition of the family infimum.
\end{proof}

\begin{remark}[Bayes versus AI-receiver losslessness]
Zero communication gap guarantees losslessness only relative to the unconstrained Bayes receiver. End-to-end AI-receiver losslessness also requires zero family and deployment gaps. Here losslessness means zero excess risk relative to the source-informed Bayes predictor, not zero intrinsic prediction risk. If the family infimum is not attained, $\Delta_{\rm family}^{\mathcal V}=0$ means arbitrarily accurate approximation in expected excess risk rather than exact realizability by a single member of $\mathcal V$.
\end{remark}

\begin{corollary}[Log-loss decomposition]
\label{cor:ai-receiver-logloss}

Let $L=L_{\log}$, with a common logarithm base for loss, entropy, relative entropy, and mutual information, and write
\[
\mathcal R_{\log}(Q)
\triangleq
\mathcal R_{L_{\log}}(Q).
\]
Whenever the displayed quantities are finite, Theorem~\ref{thm:3} identifies the communication gap exactly as conditional mutual information:
\[
\Delta_{\rm comm}
=
I(F;X\mid Y,C),
\]
and consequently
\[
\boxed{
\begin{aligned}
\mathcal R_{\log}(Q_\theta)
={}&
H(F\mid X,C)
+
I(F;X\mid Y,C)\\
&+
\Delta_{\rm family}^{\mathcal V}
+
\Delta_{\rm deploy}^{\theta}.
\end{aligned}
}
\]
Equivalently,
\[
\boxed{
\begin{aligned}
\mathbb E\!\left[
D_{\mathrm{KL}}\!\left(
S_X
\middle\|
Q_\theta(\cdot\mid Y,C)
\right)
\right]
={}&
I(F;X\mid Y,C)\\
&+
\Delta_{\rm family}^{\mathcal V}
+
\Delta_{\rm deploy}^{\theta}.
\end{aligned}
}
\]

The receiver-side gaps are
\[
\begin{aligned}
\Delta_{\rm family}^{\mathcal V}
&=
\inf_{Q\in\mathcal V}
\mathbb E\!\left[
D_{\mathrm{KL}}\!\left(
S_Y
\middle\|
Q(\cdot\mid Y,C)
\right)
\right]\\
&=
\inf_{Q\in\mathcal V}\mathcal R_{\log}(Q)
-
H(F\mid Y,C),
\end{aligned}
\]
and
\[
\Delta_{\rm deploy}^{\theta}
=
\mathcal R_{\log}(Q_\theta)
-
\inf_{Q\in\mathcal V}\mathcal R_{\log}(Q).
\]
\end{corollary}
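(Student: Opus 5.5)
This corollary is a specialization of Proposition~\ref{prop:ai-receiver-decomposition} to $L=L_{\log}$, so the plan is to instantiate the general decomposition and then replace each Bayes-level quantity by its Shannon counterpart. The first step is to check that the hypotheses of Proposition~\ref{prop:ai-receiver-decomposition} hold. Logarithmic loss is proper, indeed strictly proper, on a dominated class $\mathcal Q$, because $R_{L_{\log}}(P,Q)-R_{L_{\log}}(P,P)=D_{\mathrm{KL}}(P\|Q)\geq 0$, with equality only when $P=Q$. The finiteness assumptions of the corollary supply the remaining requirement that all displayed risks be finite. I will also record the identification $H_{L_{\log}}(P)=H(P)$, the (differential, relative to $\mu$) entropy of $P$. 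It follows that $H_{L_{\log}}(F\mid U,C)=\mathbb E[H(S_U)]=H(F\mid U,C)$ for $U\in\{X,Y\}$, and that $D_{L_{\log}}=D_{\mathrm{KL}}$.

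Second, I will identify the communication gap. By the definition in the proposition and by Theorem~\ref{thm:2},
\[
\Delta_{\rm comm}=\mathbb E[D_{\mathrm{KL}}(S_X\|S_Y)]=D_p(X\to Y\mid C).
\]
Theorem~\ref{thm:3}, which applies under $F-(X,C)-Y$, then gives $D_p(X\to Y\mid C)=I(F;X\mid Y,C)$. Substituting this into the boxed decomposition of Proposition~\ref{prop:ai-receiver-decomposition} produces the second boxed identity directly.

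Third, for the first boxed identity I will use the relation stated in that proposition,
\[
\mathbb E D_L(S_X\|Q_\theta(\cdot\mid Y,C))=\mathcal R_L(Q_\theta)-H_L(F\mid X,C),
\]
with $H_{L_{\log}}(F\mid X,C)=H(F\mid X,C)$. Rearranging this relation gives the expression for $\mathcal R_{\log}(Q_\theta)$.

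Fourth, the expressions for the receiver-side gaps are the proposition's definitions with $D_L$ replaced by $D_{\mathrm{KL}}$ and $H_L(F\mid Y,C)$ replaced by $H(F\mid Y,C)$. The only point needing care is the step that exchanges the infimum with the constant $H(F\mid Y,C)$. This step is legitimate because $H(F\mid Y,C)$ is finite by assumption.

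The main obstacle is bookkeeping rather than any new idea. Specifically, the identification $H_{L_{\log}}(F\mid U,C)=H(F\mid U,C)$ needs the entropies to be taken relative to the same reference measure $\mu$ and in a common logarithm base, and the subtractions need all quantities to be finite so that no $\infty-\infty$ arises. I would state these conventions explicitly, noting that differential entropies may be negative without affecting any of the identities, and then cite the earlier results.
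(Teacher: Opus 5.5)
Your proposal is correct and follows essentially the same route as the paper: specialize $H_{L_{\log}}=H$ and $D_{L_{\log}}=D_{\mathrm{KL}}$, identify $\Delta_{\rm comm}$ with $I(F;X\mid Y,C)$ via Theorem~\ref{thm:3}, substitute into Proposition~\ref{prop:ai-receiver-decomposition}, and read the receiver-side gaps off their definitions. The one cosmetic difference is that you obtain $\Delta_{\rm comm}$ from $\mathbb E[D_{\mathrm{KL}}(S_X\|S_Y)]=D_p(X\to Y\mid C)$ and the first identity of Theorem~\ref{thm:3}, whereas the paper writes it as the entropy difference $H(F\mid Y,C)-H(F\mid X,C)$; your hypothesis checks and finiteness bookkeeping are simply more explicit than the paper's.
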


\begin{proof}
Under log loss, $H_{L_{\log}}=H$ and $D_{L_{\log}}=D_{\mathrm{KL}}$. By Theorem~\ref{thm:3},
\[
\Delta_{\rm comm}
=
H(F\mid Y,C)-H(F\mid X,C)
=
I(F;X\mid Y,C).
\]
Substituting this identity into the decomposition of Proposition~\ref{prop:ai-receiver-decomposition} yields both boxed formulas. The expressions for the receiver-side gaps follow directly from their definitions under $D_{L_{\log}}=D_{\mathrm{KL}}$.
\end{proof}

Thus logarithmic loss identifies the communication term with Shannon information while leaving two receiver-side excess-risk terms. Communication removes predictive information before it reaches the receiver, whereas the family and deployment terms quantify failures to exploit information already available there.

\begin{table*}
\caption{Three receivers with the same one-bit deployed log-loss excess risk, each limited by a different layer.}
\label{tab:three-layer-logloss}
\centering
\footnotesize
\setlength{\tabcolsep}{2pt}
\begin{tabular}{@{}>{\raggedright\arraybackslash}p{0.13\textwidth}>{\raggedright\arraybackslash}p{0.07\textwidth}>{\raggedright\arraybackslash}p{0.15\textwidth}>{\raggedright\arraybackslash}p{0.17\textwidth}>{\centering\arraybackslash}p{0.10\textwidth}>{\centering\arraybackslash}p{0.12\textwidth}>{\centering\arraybackslash}p{0.11\textwidth}@{}}
\toprule
Limiting layer & $Y$ & Predictor family & Deployed predictor & $\Delta_{\rm comm}$ & $\Delta_{\rm family}^{\mathcal V}$ & $\Delta_{\rm deploy}^{\theta}$\\
\midrule
Communication & $A$ & $\mathcal V_Y$ & $Q_{1/2}=S_Y$ & $1$ & $0$ & $0$\\
Model family & $(A,B)$ & $\mathcal V_A$ & $Q_{1/2}$, family-optimal & $0$ & $1$ & $0$\\
Deployment & $(A,B)$ & $\mathcal V_Y$ & $Q_{1/2}$, suboptimal & $0$ & $0$ & $1$\\
\bottomrule
\end{tabular}
\end{table*}

\subsection{Restricted Receiver Interfaces and Layerwise Diagnosis}

The family gap in Proposition~\ref{prop:ai-receiver-decomposition} is defined abstractly for an arbitrary predictor family. Its meaning becomes particularly transparent when the receiver family can access the received information only through a restricted interface statistic. This case shows how predictive value can reach the receiver yet become unavailable to the deployed prediction pipeline.

\begin{remark}[Restricted receiver interface]
\label{rem:restricted-statistic-receiver}

Let
\[
U=\phi(Y,C)
\]
be the receiver-interface statistic, and suppose that the predictor retains separate access to the shared context $C$. Let $\mathcal V_U$ be the class of all admissible kernels of the form
\[
Q(\cdot\mid y,c)=\widetilde Q(\cdot\mid\phi(y,c),c),
\]
as $\widetilde Q$ ranges over measurable predictive reports based on $(U,C)$. Whenever $S_U\in\mathcal Q$ almost surely,
\[
H_{L,\mathcal V_U}(F\mid Y,C)
=
H_L(F\mid U,C),
\]
and hence
\[
\Delta_{\rm family}^{\mathcal V_U}
=
H_L(F\mid U,C)-H_L(F\mid Y,C)
=
\mathbb E D_L(S_Y\|S_U).
\]
Under log loss, whenever finite,
\[
\Delta_{\rm family}^{\mathcal V_U}
=
I(F;Y\mid U,C).
\]
Thus the family gap reduces to the predictive value in $Y$ that is discarded by the receiver interface $U$. If the predictor cannot separately access $C$, the relevant Bayes state and risk are instead $P_{F\mid U}$ and $H_L(F\mid U)$.
\end{remark}

This specialization shows that communication loss and receiver-interface loss can remove the same amount of predictive value while occurring at different stages of the system. The next example makes this distinction explicit.

\begin{example}[Equal end-to-end loss from three distinct layers]
\label{ex:three-layer-logloss}

Let $C$ be deterministic,
\[
A,B\overset{\mathrm{i.i.d.}}{\sim}\operatorname{Bernoulli}(1/2),
\qquad
X=(A,B),
\qquad
F=B,
\]
and use base-two logarithmic loss. For any information variable $Z$, let $\mathcal V_Z$ denote the class of all predictive kernels measurable with respect to $Z$, and let $Q_{1/2}$ always report $\operatorname{Bernoulli}(1/2)$.

Table~\ref{tab:three-layer-logloss} compares three receivers with the same deployed log-loss excess risk of one bit. In all three cases, $H(F\mid X)=0$, but exactly one of the three gaps is nonzero.

Although the three systems have identical end-to-end performance, their remedies are different. The communication-limited system requires a more informative representation or link; the family-limited system requires a richer interface or predictor family; and the deployment-limited system requires improved training, optimization, or calibration.
\end{example}

The restricted-interface case also makes explicit why representations carrying the same Bayes content need not be equally usable by a constrained predictor family: usability depends on how that information is exposed to $\mathcal V$.

\section{Matched Compression on the Predictive Quotient}
\label{sec:rd-consequence}

The preceding sections identify the predictive state as the fidelity object and derive its matched distortion without committing to a coding architecture. We now ask an operational question: once fidelity depends only on the predictive state, can access to additional distinctions in the raw source improve compression?

We answer this question under a finite-alphabet memoryless model by comparing indirect coding from the raw source with direct coding of the predictive state.

\subsection{Matched Direct and Indirect Coding Problems}

To isolate the value of raw-source information beyond the predictive state, we compare two coding problems under exactly the same predictive-state distortion. The indirect encoder observes the raw source, whereas the direct encoder observes only its predictive state.

For this section only, let

\[
(F_i,X_i,C_i),
\qquad i=1,\ldots,n,
\]

be i.i.d. copies of a finite-alphabet triple $(F,X,C)$, with $C^n$ available noncausally at both terminals. Define the single-letter predictive state

\[
S=\eta(X,C)
\triangleq
P_{F\mid X,C}(\cdot\mid X,C),
\]

and let

\[
S_i=\eta(X_i,C_i)
\]

denote its i.i.d. copies. Let $\mathcal S$ denote the finite support of $S$.

Fix a finite reproduction alphabet $\widehat{\mathcal S}$ satisfying

\[
\mathcal S\cup\widehat{\mathcal S}\subseteq\mathcal Q,
\]

and use the matched single-letter distortion

\[
d_p(s,\widehat s)
\triangleq
D_L(s\|\widehat s),
\qquad
(s,\widehat s)\in
\mathcal S\times\widehat{\mathcal S},
\]

assumed finite on this domain. This is the predictive-state fidelity developed in Section~III.

For the indirect coding problem, pull it back to the raw source alphabet through $\eta$:
\[
\widetilde d(x,c,\widehat s)
\triangleq
d_p\!\left(\eta(x,c),\widehat s\right)
=
D_L\!\left(\eta(x,c)\middle\|\widehat s\right).
\]
Thus the indirect and direct coding problems below are evaluated under the same matched predictive-state criterion and differ only in the encoder observation. In particular, $\widetilde d$ is constant on predictive-equivalence classes in its source argument.

An $(n,M_n)$ indirect code is a pair

\[
\begin{aligned}
f_n^X &: \mathcal X^n\times\mathcal C^n\to[1:M_n],\\
g_n^X &: [1:M_n]\times\mathcal C^n\to\widehat{\mathcal S}^{\,n},
\end{aligned}
\]

producing

\[
\widehat S_X^n
=
g_n^X\!\left(f_n^X(X^n,C^n),C^n\right).
\]

An $(n,M_n)$ direct state code is a pair

\[
\begin{aligned}
f_n^S &: \mathcal S^n\times\mathcal C^n\to[1:M_n],\\
g_n^S &: [1:M_n]\times\mathcal C^n\to\widehat{\mathcal S}^{\,n},
\end{aligned}
\]

producing

\[
\widehat S_S^n
=
g_n^S\!\left(f_n^S(S^n,C^n),C^n\right).
\]

For either problem, define the average distortion

\[
D_n
\triangleq
\frac1n
\sum_{i=1}^{n}
\mathbb E[d_p(S_i,\widehat S_i)],
\]

where $\widehat S^n$ denotes the corresponding reconstruction. A rate $R$ is achievable at distortion $D$ if there exists a sequence of the corresponding codes such that

\[
\begin{aligned}
\limsup_{n\to\infty}
\frac1n\log M_n
&\le R,\\
\limsup_{n\to\infty}D_n
&\le D.
\end{aligned}
\]

Let

\[
R_{X\to S}^{\mathrm{ind}}(D)
\quad\text{and}\quad
R_S^{\mathrm{dir}}(D)
\]

denote the infima of achievable rates in the indirect and direct problems, respectively. The two problems differ only in the encoder observation: the indirect encoder sees $(X^n,C^n)$, whereas the direct encoder sees the induced quotient sequence $(S^n,C^n)$.

\subsection{Predictive Quotient Reduction}

The indirect encoder appears to have an advantage because its test channel may depend on raw-source distinctions that the predictive state has discarded. The key question is whether such within-class dependence can improve matched fidelity. The next lemma shows that it cannot.

\begin{lemma}[Fiber-averaging reduction to the predictive quotient]
\label{lem:predictive-quotient}

For every test channel $W_{\widehat S\mid X,C}$ with reproduction alphabet $\widehat{\mathcal S}$, there exists a state test channel $Q_{\widehat S\mid S,C}$ such that the two channels induce the same joint law of $(S,C,\widehat S)$. Consequently, they achieve the same expected matched distortion, and

\[
I_Q(S;\widehat S\mid C)
\le
I_W(X;\widehat S\mid C).
\]

More precisely,

\[
\begin{aligned}
&I_W(X;\widehat S\mid C)
-I_Q(S;\widehat S\mid C)\\
&\qquad=
I_W(X;\widehat S\mid S,C)
\ge 0.
\end{aligned}
\]

\end{lemma}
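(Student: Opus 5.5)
The plan is to average the given test channel over each predictive fiber. Because all alphabets are finite, no measurability issues arise. Fix $W_{\widehat S\mid X,C}$ and define the state test channel
\[
Q(\widehat s\mid s,c)\triangleq\sum_{x:\,\eta(x,c)=s}P_{X\mid S,C}(x\mid s,c)\,W(\widehat s\mid x,c)
\]
for every $(s,c)$ with $P_{S,C}(s,c)>0$. Off the support, $Q$ can be set to any fixed distribution on $\widehat{\mathcal S}$. Equivalently, $Q_{\widehat S\mid S,C}$ is the conditional law of $\widehat S$ given $(S,C)$ under the joint law $P_{C,X}W$, where $S=\eta(X,C)$ is a deterministic function of $(X,C)$.

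\textbf{Step 1: joint laws agree.} Under $W$, I would compute
\[
P_W(s,c,\widehat s)=\sum_{x:\eta(x,c)=s}P_{C,X}(c,x)W(\widehat s\mid x,c).
\]
This sum equals $P_{S,C}(s,c)\,Q(\widehat s\mid s,c)$, which is the joint law induced by $Q$ acting on $(S,C)$. Hence the two channels give the same law of $(S,C,\widehat S)$.

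\textbf{Step 2: distortions agree.} The matched distortion $\widetilde d(x,c,\widehat s)=d_p(\eta(x,c),\widehat s)$ depends on $(x,c)$ only through $s=\eta(x,c)$. So $\mathbb E_W[\widetilde d]=\mathbb E[d_p(S,\widehat S)]$ is a functional of the law of $(S,\widehat S)$ alone. By Step 1, this expectation is the same under $Q$.

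\textbf{Step 3: mutual-information identity.} Under $W$, since $S$ is a function of $(X,C)$, we have $I_W(X;\widehat S\mid C)=I_W(X,S;\widehat S\mid C)$. The chain rule then gives
\[
I_W(X;\widehat S\mid C)=I_W(S;\widehat S\mid C)+I_W(X;\widehat S\mid S,C).
\]
The term $I_W(S;\widehat S\mid C)$ depends only on the law of $(S,C,\widehat S)$, so by Step 1 it equals $I_Q(S;\widehat S\mid C)$. Rearranging yields the stated identity, and nonnegativity of conditional mutual information gives the inequality.

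The only step needing care is Step 1: $Q$ must be well defined, and it must not depend on $\widehat s$ through $X$ in a way that breaks the factorization $P_{S,C}Q$. Handling zero-probability $(s,c)$ by an arbitrary choice settles this, since those pairs do not affect any of the joint laws. Everything else is the chain rule on finite alphabets, so all quantities are finite.
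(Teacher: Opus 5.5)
Your proposal is correct and follows essentially the same route as the paper: you use the same fiber-averaged channel $Q(\widehat s\mid s,c)=\sum_{x:\,\eta(x,c)=s}P_{X\mid S,C}(x\mid s,c)W(\widehat s\mid x,c)$ and verify that it preserves the law of $(S,C,\widehat S)$. You then apply the chain rule with $S=\eta(X,C)$ deterministic to obtain $I_W(X;\widehat S\mid C)-I_Q(S;\widehat S\mid C)=I_W(X;\widehat S\mid S,C)\ge 0$.
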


\begin{proof}

For every pair $(s,c)$ with

\[
P_{S,C}(s,c)>0,
\]

define

\[
Q(\widehat s\mid s,c)
\triangleq
\sum_{x:\,\eta(x,c)=s}
P_{X\mid S,C}(x\mid s,c)
W(\widehat s\mid x,c),
\]
and define $Q$ arbitrarily on $P_{S,C}$-null pairs. Then

\[
\begin{aligned}
&P_{S,C}(s,c)Q(\widehat s\mid s,c)\\
&\quad=
\sum_{x:\,\eta(x,c)=s}
P_{X,C}(x,c)
W(\widehat s\mid x,c),
\end{aligned}
\]

which is exactly the joint probability of $(S,C,\widehat S)=(s,c,\widehat s)$ under the original channel $W$. Hence the construction preserves $P_{S,C,\widehat S}$ and therefore the matched distortion.

Since $S=\eta(X,C)$ is a deterministic function of $(X,C)$, the chain rule under the original channel $W$ gives

\[
\begin{aligned}
I_W(X;\widehat S\mid C)
&=
I_W(S;\widehat S\mid C)\\
&\quad+
I_W(X;\widehat S\mid S,C).
\end{aligned}
\]

Preservation of $(S,C,\widehat S)$ gives

\[
I_Q(S;\widehat S\mid C)
=
I_W(S;\widehat S\mid C).
\]

Combining these identities proves the result.
\end{proof}

Thus any dependence of the reconstruction on distinctions within a predictive-equivalence class consumes rate but cannot improve the matched distortion.

\subsection{Rate-Distortion Preservation}

Lemma~\ref{lem:predictive-quotient} establishes the reduction at the single-letter test-channel level. We now combine it with classical conditional rate-distortion theory to show that the same reduction holds operationally.

\begin{theorem}[Rate-distortion preservation on the predictive quotient]
\label{thm:rd-reduction}

Under the finite-alphabet memoryless model of Section~V-A, for every distortion level $D\ge 0$,

\[
\boxed{
\begin{aligned}
R_{X\to S}^{\mathrm{ind}}(D)
&=
\inf_{\substack{
W_{\widehat S\mid X,C}:\\
\mathbb E_W[d_p(S,\widehat S)]\le D
}}
I_W(X;\widehat S\mid C)\\
&=
\inf_{\substack{
Q_{\widehat S\mid S,C}:\\
\mathbb E_Q[d_p(S,\widehat S)]\le D
}}
I_Q(S;\widehat S\mid C)\\
&=
R_S^{\mathrm{dir}}(D).
\end{aligned}
}
\]

Thus access to raw-source distinctions within a predictive-equivalence class does not improve the rate-distortion tradeoff under matched predictive-state fidelity.

\end{theorem}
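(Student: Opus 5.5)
The plan is to prove the chain of three equalities in the boxed display from the two ends toward the middle. The outer equalities are operational coding theorems, and the middle one is a single-letter optimization identity supplied by Lemma~\ref{lem:predictive-quotient}.

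\textbf{Step 1: indirect problem as a conditional rate-distortion problem.} I would observe that the indirect problem is an ordinary (not remote) conditional rate-distortion problem on the finite i.i.d.\ source $(X_i)$, with side information $C^n$ at both terminals. The single-letter distortion is $\widetilde d(x,c,\widehat s)=d_p(\eta(x,c),\widehat s)$, which is a deterministic function of $(x,c,\widehat s)$. It is finite on the relevant domain and bounded, since the alphabets are finite. Consequently $\frac1n\sum_i\mathbb E[d_p(S_i,\widehat S_i)]=\frac1n\sum_i\mathbb E[\widetilde d(X_i,C_i,\widehat S_i)]$, and the classical conditional rate-distortion theorem (Gray) applies verbatim. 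This gives
\[
R^{\rm ind}_{X\to S}(D)=\min\bigl\{I_W(X;\widehat S\mid C):\ \mathbb E_W[\widetilde d]\le D\bigr\},
\]
which is the first equality.

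\textbf{Step 2: direct problem.} The same theorem applies to the i.i.d.\ pair $(S_i,C_i)$, with finite $\mathcal S$ and distortion $d_p$. This yields the last equality, with the informational expression $\inf_Q I_Q(S;\widehat S\mid C)$.

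\textbf{Step 3: the middle equality.} I would prove it as two inequalities.

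For "$\ge$": given any $W$ feasible at level $D$, Lemma~\ref{lem:predictive-quotient} produces a state test channel $Q$ with the same law of $(S,C,\widehat S)$. Hence $Q$ has the same matched distortion and is feasible, and it satisfies $I_Q(S;\widehat S\mid C)\le I_W(X;\widehat S\mid C)$. So the $S$-infimum is at most the $X$-infimum.

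For "$\le$": given any $Q$ feasible for the state problem, define $W(\widehat s\mid x,c)\triangleq Q(\widehat s\mid\eta(x,c),c)$. Then $\widehat S-(S,C)-X$ holds, so the law of $(S,C,\widehat S)$ and the distortion are unchanged, and $I_W(X;\widehat S\mid S,C)=0$. By the chain-rule identity in the lemma, $I_W(X;\widehat S\mid C)=I_Q(S;\widehat S\mid C)$. This direction is immediate.

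\textbf{Expected obstacle: applying the coding theorems at boundary distortion levels.} The middle identity itself is routine. The step needing care is the boundary case, e.g.\ $D$ equal to the minimal achievable distortion $D_{\min}=\mathbb E[\min_{\widehat s}d_p(S,\widehat s)]$, or $D$ below it, where the feasible set is empty and both sides are $+\infty$ by convention. I would handle this as follows:
\begin{itemize}
\item The informational rate-distortion functions are convex and nonincreasing in $D$.
\item They are finite, hence continuous, on $[D_{\min},\infty)$ for finite alphabets with bounded distortion, so the standard theorem applies on this range including $D_{\min}$.
\item Both problems share the same $D_{\min}$, because $\widetilde d$ depends on $(x,c)$ only through $(\eta(x,c),c)$.
\end{itemize}
Hence the two operational quantities coincide for every $D\ge0$, including the infeasible range where both are infinite.

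Alternatively, one could avoid the indirect coding theorem for the converse. Since any indirect code already yields a valid state-problem reconstruction with identical distortion, the inequality $R^{\rm ind}\ge R^{\rm dir}$ can only fail if indirect codes do better. That is ruled out by the single-letter converse together with Step 3. Conversely, $R^{\rm ind}\le R^{\rm dir}$ holds trivially: the indirect encoder can compute $S^n=\eta(X^n,C^n)$ and run any direct code. Using this trivial operational inequality reduces reliance on the theorem to the converse for the indirect problem only.
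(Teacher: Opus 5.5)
Your proposal is correct and matches the paper's proof essentially step for step. The classical conditional rate-distortion theorem (Gray) gives the two outer equalities. The middle equality follows from the fiber-averaging lemma in one direction and the pullback $W(\widehat s\mid x,c)=Q(\widehat s\mid\eta(x,c),c)$ in the other. Your boundary discussion at $D\le D_{\min}$ is extra care that the paper leaves implicit in its citation; note that right-continuity at $D_{\min}$ follows from compactness of the finite-alphabet set of test channels, not from convexity and finiteness alone.
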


\begin{proof}

Classical conditional rate-distortion theory with $C^n$ available noncausally at both terminals \cite{gray1972conditional} gives

\[
R_{X\to S}^{\mathrm{ind}}(D)
=
\inf_{\substack{
W_{\widehat S\mid X,C}:\\
\mathbb E_W[d_p(S,\widehat S)]\le D
}}
I_W(X;\widehat S\mid C),
\]

with the source-space pullback $\widetilde d$ defined in Section~V-A, and

\[
R_S^{\mathrm{dir}}(D)
=
\inf_{\substack{
Q_{\widehat S\mid S,C}:\\
\mathbb E_Q[d_p(S,\widehat S)]\le D
}}
I_Q(S;\widehat S\mid C).
\]

For the first direction, every feasible raw-source channel $W_{\widehat S\mid X,C}$ is reduced by Lemma~\ref{lem:predictive-quotient} to a state channel $Q_{\widehat S\mid S,C}$ with the same joint law of $(S,C,\widehat S)$, and hence the same distortion, such that

\[
I_Q(S;\widehat S\mid C)
\le
I_W(X;\widehat S\mid C).
\]

Taking infima gives

\[
R_S^{\mathrm{dir}}(D)
\le
R_{X\to S}^{\mathrm{ind}}(D).
\]

For the reverse direction, given any feasible state channel $Q_{\widehat S\mid S,C}$, define a raw-source channel by

\[
W(\widehat s\mid x,c)
\triangleq
Q\!\left(\widehat s\mid\eta(x,c),c\right).
\]

This construction preserves the joint law of $(S,C,\widehat S)$ and therefore the expected distortion. Moreover,

\[
X-(S,C)-\widehat S
\]

under $W$, and $S$ is a deterministic function of $(X,C)$. Hence

\[
\begin{aligned}
I_W(X;\widehat S\mid C)
&=
I_W(S;\widehat S\mid C)
+I_W(X;\widehat S\mid S,C)\\
&=
I_W(S;\widehat S\mid C)\\
&=
I_Q(S;\widehat S\mid C).
\end{aligned}
\]

Thus every feasible state channel induces a feasible raw-source channel at the same rate and distortion, yielding

\[
R_{X\to S}^{\mathrm{ind}}(D)
\le
R_S^{\mathrm{dir}}(D).
\]

Combining the two inequalities proves the result.

\end{proof}

\begin{remark}[Matched-fidelity boundary]

The rate-distortion equivalence holds because the source-space distortion factors through the predictive state:

\[
\widetilde d(x,c,\widehat s)
=
d_p\!\left(\eta(x,c),\widehat s\right).
\]

Thus all source realizations within a predictive class are indistinguishable to the fidelity criterion. If a different distortion distinguishes such realizations, its rate-distortion function need not be preserved by the predictive quotient.

\end{remark}

\subsection{Log-Loss Closure and Information-Bottleneck Form}

Theorem~\ref{thm:rd-reduction} establishes the matched quotient reduction for a finite reproduction alphabet. Under logarithmic loss, allowing the full predictive simplex closes this compression result with the information-accounting identity of Section~III. Because the resulting setting falls outside the finite-reproduction assumptions of Theorem~\ref{thm:rd-reduction}, we state it as a separate finite-source extension.

Assume that $(F,X,C)$ is finite, let the reproduction alphabet be the full probability simplex $\mathcal P(\mathcal F)$, and use the extended-valued distortion

\[
d_{\log}(s,\widehat s)
\triangleq
D_{\mathrm{KL}}(s\|\widehat s).
\]

Restrict attention to codes with finite expected distortion, and denote the resulting indirect and direct operational rate-distortion functions by

\[
R_{X\to S}^{\mathrm{ind},\log}(D)
\quad\text{and}\quad
R_S^{\mathrm{dir},\log}(D).
\]

For any finite auxiliary $M$ satisfying $F-(X,C)-M$, consider reconstruction maps $\widehat s:\mathcal M\times\mathcal C\to\mathcal P(\mathcal F)$. The Bayes-optimal choice under log loss is the posterior predictive law \cite{courtade_weissman2014logloss}:

\[
\widehat S^{\star}(M,C)
=
P_{F\mid M,C}.
\]

By Theorem~\ref{thm:3}, applied with receiver information $Y=M$,

\[
\begin{aligned}
&\inf_{\widehat s}
\mathbb E\!\left[
D_{\mathrm{KL}}\!\left(
P_{F\mid X,C}
\middle\|
\widehat s(M,C)
\right)
\right]\\
&\quad=
\mathbb E\!\left[
D_{\mathrm{KL}}\!\left(
P_{F\mid X,C}
\middle\|
P_{F\mid M,C}
\right)
\right]\\
&\quad=
I(F;X\mid M,C).
\end{aligned}
\]

Thus the distortion bound becomes a constraint on predictive information. The fiber-averaging construction of Lemma~\ref{lem:predictive-quotient} applies to $M$ as well: any raw-source channel $P_{M\mid X,C}$ can be averaged within the fibers of $S=\eta(X,C)$ to preserve the joint law of $(S,C,M)$ while not increasing the coding rate. Together with the encoder Markov relation, predictive sufficiency gives $F\perp M\mid(S,C)$; preserving $(S,C,M)$ therefore preserves the posterior $P_{F\mid M,C}$ and the Bayes log-loss distortion. Conversely, every state channel can be pulled back through $\eta$ at the same rate and distortion.

Combining classical conditional log-loss source coding \cite{courtade_weissman2014logloss} with this quotient reduction yields
\[
\boxed{
\begin{aligned}
R_{X\to S}^{\mathrm{ind},\log}(D)
&=
R_S^{\mathrm{dir},\log}(D)\\
&=\inf_{\substack{
P_{M\mid S,C}:\\
|\mathcal M|<\infty,\\
I(F;X\mid M,C)\le D
}}
I(S;M\mid C).
\end{aligned}
}
\]

The relevance constraint can itself be expressed entirely on the predictive quotient. Indeed, $S$ is a deterministic function of $(X,C)$, predictive sufficiency gives $F\perp X\mid(S,C)$, and the state channel gives $M-(S,C)-X$. Hence $F\perp X\mid(S,M,C)$ and

\[
\begin{aligned}
I(F;X\mid M,C)
&=
I(F;X,S\mid M,C)\\
&=
I(F;S\mid M,C)
+I(F;X\mid S,M,C)\\
&=
I(F;S\mid M,C).
\end{aligned}
\]

Equivalently, the common rate-distortion function can be written as

\[
\inf_{\substack{
P_{M\mid S,C}:\\
|\mathcal M|<\infty,\\
I(F;S\mid M,C)\le D
}}
I(S;M\mid C),
\]

which is the familiar conditional information-bottleneck/log-loss form \cite{tishby1999ib,courtade_weissman2014logloss} entirely on the predictive quotient. The result remains a static finite-alphabet memoryless characterization; Section~VI turns to sequential predictive states at the state-representation level rather than through a dynamic coding theorem.

\section{Sequential Refinement and Recursive Closure}

Sections~II--V treat prediction relative to a fixed target. In sequential prediction, however, the target shifts as new observations arrive, so a state sufficient for the current target need not support recursive updating. We first show this limitation for finite-horizon states and then identify the full-future state as the minimal recursively closed refinement. The discussion concerns state representation rather than dynamic coding.

\subsection{Finite-Horizon Predictive State}

Let $\{W_t\}_{t\geq1}$ be a stochastic process on a standard Borel alphabet $\mathcal W$. Define

\[
H_t=W_{1:t},
\qquad
\mathcal H_t=\sigma(W_{1:t}),
\]

with $H_0$ the empty history and $\mathcal H_0$ the trivial $\sigma$-field. For a fixed horizon $k\geq1$, let

\[
F_t^{(k)}
\triangleq
W_{t+1:t+k}.
\]

At each time $t$, the static construction of Section~II applies with

\[
X=H_t,
\qquad
F=F_t^{(k)}.
\]

\begin{definition}[Finite-horizon predictive-state process]
\label{def:7}

The $k$-step predictive state at time $t$ is

\[
\begin{aligned}
S_t^{(k)}
&\triangleq
P\!\left(
W_{t+1:t+k}\in\cdot
\middle|
\mathcal H_t
\right)\\
&=
P_{W_{t+1:t+k}\mid H_t}
(\cdot\mid H_t).
\end{aligned}
\]

\end{definition}

By Theorem~\ref{thm:1}, $S_t^{(k)}$ is sufficient and minimal, in the measurable-factor order, for exact preservation of the current $k$-step predictive law. Sequential use requires a stronger property: whether there exists a measurable update map such that
\[
S_{t+1}^{(k)}
=
\Gamma_{k,t}\!\left(S_t^{(k)},W_{t+1}\right)
\qquad\text{almost surely}.
\]
The next subsection shows that fixed-target minimality alone does not guarantee such recursive closure.

\subsection{Finite-Horizon Sufficiency Does Not Imply Recursive Closure}

The obstruction comes from the moving target window: information outside the current horizon may be irrelevant now but become relevant after one shift. The following construction shows that this prevents a generally valid finite-horizon update.

\begin{proposition}[A counterexample to recursive closure]
\label{prop:2}
For every $k\geq1$, there exists a binary stochastic process and a time $t$ such that
\[
S_{t+1}^{(k)}
\]
is not measurable with respect to
\[
\sigma\!\left(S_t^{(k)},W_{t+1}\right).
\]
Consequently, there is no measurable map
\[
\Gamma_{k,t}:
\mathcal P(\mathcal W^k)\times\mathcal W
\longrightarrow
\mathcal P(\mathcal W^k)
\]
satisfying
\[
S_{t+1}^{(k)}
=
\Gamma_{k,t}\!\left(S_t^{(k)},W_{t+1}\right)
\qquad\text{almost surely}.
\]
\end{proposition}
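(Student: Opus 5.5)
We will construct, for each $k\ge1$, an explicit binary process in which a bit that lies just outside the current $k$-step window is invisible to $S_t^{(k)}$ but becomes relevant to $S_{t+1}^{(k)}$ once the window shifts. The simplest choice: let $Z\sim\mathrm{Bernoulli}(1/2)$ be a hidden bit. Take $t=1$ and define the process as follows.
\begin{itemize}
\item $W_1=Z$.
\item $W_2,\dots,W_{k+1}$ are i.i.d.\ $\mathrm{Bernoulli}(1/2)$, independent of $Z$.
\item $W_{k+2}=Z$.
\item All later coordinates are fixed, say zero.
\end{itemize}
Then $S_1^{(k)}=P_{W_{2:k+1}\mid W_1}$ is the uniform law on $\{0,1\}^k$ regardless of $W_1$, because $W_{2:k+1}$ is independent of $W_1$. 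Hence $S_1^{(k)}$ is almost surely constant, and $\sigma(S_1^{(k)},W_2)=\sigma(W_2)$ modulo null sets.

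Next we compute $S_2^{(k)}=P_{W_{3:k+2}\mid W_{1:2}}$. Given $W_{1:2}$, the coordinates $W_{3:k+1}$ are uniform and independent, while $W_{k+2}=Z=W_1$ is known. So $S_2^{(k)}$ is the product of the uniform law on $\{0,1\}^{k-1}$ with $\delta_{W_1}$; for $k=1$ it is just $\delta_{W_1}$. The map $w_1\mapsto S_2^{(k)}$ is injective, so $S_2^{(k)}$ generates $\sigma(W_1)$. Since $W_1=Z$ is independent of $W_2$ and nondegenerate, $\sigma(W_1)\not\subseteq\sigma(W_2)$ modulo $P$. In particular, $S_2^{(k)}$ is not almost surely equal to any measurable function of $W_2$. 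Concretely, on $\{W_2=0\}$ the variable $S_2^{(k)}$ still takes two distinct values, each with conditional probability $1/2$.

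The "consequently" clause then follows at once. If a measurable $\Gamma_{k,1}$ existed with $S_2^{(k)}=\Gamma_{k,1}(S_1^{(k)},W_2)$ almost surely, then $S_2^{(k)}$ would be $\sigma(S_1^{(k)},W_2)$-measurable modulo $P$, contradicting the previous step. Measurability of $\Gamma$ on $\mathcal P(\mathcal W^k)\times\mathcal W$ with the evaluation $\sigma$-field is all that is needed for the composition to be a random variable.

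The only delicate point is stating non-measurability correctly "modulo null sets", since conditional distributions are defined only almost surely. We will handle it directly: exhibit two atoms of the finite space, $(W_1,W_2)=(0,0)$ and $(1,0)$, each of positive probability. On these atoms $S_1^{(k)}$ and $W_2$ agree but $S_2^{(k)}$ differs. Because every version of $S_2^{(k)}$ must agree with this one on positive-probability atoms of a finite space, the version choice is immaterial. If stationarity were desired, one could instead take the $(k+1)$-periodic process with a uniform random phase, but the finite construction above suffices for the statement as posed.
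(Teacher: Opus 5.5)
Your proof is correct and follows the paper's proof essentially verbatim. It uses the same hidden bit placed at $W_1$ and $W_{k+2}$ with $k$ independent fair bits in between, the same computations $S_1^{(k)}=\operatorname{Unif}(\{0,1\}^k)$ and $S_2^{(k)}=\operatorname{Bernoulli}(1/2)^{\otimes(k-1)}\otimes\delta_{W_1}$, and the same two positive-probability events $\{W_1=0,W_2=b\}$ and $\{W_1=1,W_2=b\}$, which share $(S_1^{(k)},W_2)$ but give different $S_2^{(k)}$.

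Your added remark that the choice of conditional-law version does not matter on positive-probability atoms of the finite history space is a small but useful tightening of the paper's almost-sure bookkeeping.
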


\begin{proof}
Let
\[
A,B_1,\ldots,B_k
\overset{\mathrm{ind}}{\sim}
\operatorname{Bernoulli}(1/2),
\]
and define
\[
W_1=A,
\qquad
W_{j+1}=B_j,
\quad j=1,\ldots,k,
\qquad
W_{k+2}=A,
\]
extending the process arbitrarily thereafter.

At $t=1$, the current target $(B_1,\ldots,B_k)$ is independent of the revealed history $H_1=A$. Hence
\[
S_1^{(k)}
=
\operatorname{Unif}(\{0,1\}^k)
\qquad\text{almost surely}.
\]
After observing $W_2=B_1$, the shifted target becomes $(B_2,\ldots,B_k,A)$, so
\[
S_2^{(k)}
=
\operatorname{Bernoulli}(1/2)^{\otimes(k-1)}
\otimes\delta_A,
\]
with the first factor absent for $k=1$.

For either $b\in\{0,1\}$, the events $\{A=0,B_1=b\}$ and $\{A=1,B_1=b\}$ have positive probability and produce the same pair
\[
\bigl(S_1^{(k)},W_2\bigr)
=
\bigl(\operatorname{Unif}(\{0,1\}^k),b\bigr),
\]
but different values of $S_2^{(k)}$. Thus $S_2^{(k)}$ is not measurable with respect to $\sigma(S_1^{(k)},W_2)$.
\end{proof}

The failure is caused by horizon shift: $W_{t+k+1}$ lies outside the target defining $S_t^{(k)}$ but enters the next target. Information that is irrelevant to the current finite horizon can therefore become relevant one step later. Particular processes may admit closed finite-horizon updates, but such closure does not follow from predictive sufficiency alone.

\subsection{Full-Future Refinement}

The counterexample identifies what the finite-horizon state can discard: predictive distinctions that lie beyond the current target but may matter after later shifts. To retain all distinctions that can become relevant at any finite horizon, we now take the entire future as the prediction target.

Let
\[
W_{t+1:\infty}
\triangleq
(W_{t+1},W_{t+2},\ldots)
\]
denote the entire future tail.

\begin{definition}[Full-future predictive state]
\label{def:8}
The full-future predictive state at time $t$ is
\[
\Sigma_t
\triangleq
P_{W_{t+1:\infty}\mid H_t}
(\cdot\mid H_t)
\in
\mathcal P(\mathcal W^{\mathbb N}).
\]
\end{definition}

For $k\geq1$, let $\pi_k$ project onto the first $k$ future coordinates. Then
\[
S_t^{(k)}
=
(\pi_k)_{\#}\Sigma_t
\qquad\text{almost surely},
\]
and hence
\[
\sigma(S_t^{(k)})
\subseteq
\sigma(\Sigma_t)
\qquad
(\operatorname{mod}P).
\]
Thus $\Sigma_t$ determines every finite-horizon predictive state, whereas $S_t^{(k)}$ retains only the marginal required by the current target.

\subsection{Full-Future Sufficiency and Recursive Closure}

We now ask whether the full-future refinement has the two properties needed for sequential prediction: predictive sufficiency for the entire future and recursive closure under new observations.

\begin{definition}[Recursive predictive representation]
\label{def:9}
A state process $T_t=\rho_t(H_t)$ taking values in a standard Borel state space $\mathcal T_t$ is a \textbf{recursive predictive representation} if, for every $t$, it satisfies both:
\begin{enumerate}
\item \textbf{full-future sufficiency:}
\[
W_{t+1:\infty}
\perp\!\!\!\perp
H_t
\mid T_t;
\]
\item \textbf{recursive closure:} there exists a measurable update map
\[
\Psi_t:
\mathcal T_t\times\mathcal W
\longrightarrow
\mathcal T_{t+1}
\]
such that
\[
T_{t+1}
=
\Psi_t(T_t,W_{t+1})
\qquad\text{almost surely}.
\]
\end{enumerate}
\end{definition}

By measurable disintegration on standard Borel spaces \cite[Thm.~1.25]{kallenberg2017random}, we may fix a jointly measurable kernel
\[
\Phi:
\mathcal P(\mathcal W^{\mathbb N})
\times
\mathcal W
\longrightarrow
\mathcal P(\mathcal W^{\mathbb N}),
\]
where $\Phi(\sigma,w)$ gives the conditional law of tail coordinates $2,3,\ldots$ under $\sigma$ given first coordinate $w$, with arbitrary values on $\sigma_1$-null observations; here $\sigma_1$ is the first-coordinate marginal of $\sigma$.

\begin{theorem}[Full-future predictive state: minimality and recursive closure]
\label{thm:recursive-state}
Under the preceding standard Borel assumptions, the full-future state $\Sigma_t$ satisfies the following properties.
\begin{enumerate}
\item \textbf{Full-future sufficiency and minimality.}
\[
W_{t+1:\infty}
\perp\!\!\!\perp
H_t
\mid
\Sigma_t.
\]
Moreover, if $T_t=\rho_t(H_t)$ is any history statistic satisfying
\[
W_{t+1:\infty}
\perp\!\!\!\perp
H_t
\mid
T_t,
\]
then there exists a measurable map $g_t$ such that
\[
\Sigma_t
=
g_t(T_t)
\qquad\text{almost surely}.
\]
Equivalently,
\[
\sigma(\Sigma_t)
\subseteq
\sigma(T_t)
\qquad
(\operatorname{mod}P).
\]
\item \textbf{Recursive update.}
\[
\Sigma_{t+1}
=
\Phi(\Sigma_t,W_{t+1})
\qquad\text{almost surely}.
\]
\item \textbf{Markov representation.}
The process $\{\Sigma_t\}$ admits the transition kernel
\[
K(\sigma,B)
\triangleq
\int_{\mathcal W}
\mathbf 1
\bigl\{
\Phi(\sigma,w)\in B
\bigr\}
\sigma_1(dw),
\]
for $B\in\mathcal B(\mathcal P(\mathcal W^{\mathbb N}))$.
\end{enumerate}
Consequently, $\{\Sigma_t\}$ is itself a recursive predictive representation and is a measurable factor of every other full-future sufficient representation, in particular of every recursive predictive representation.
\end{theorem}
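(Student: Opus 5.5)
The plan is to prove the three parts in order and then read off the concluding statement.

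\emph{Part 1.} Sufficiency and minimality follow from Theorem~\ref{thm:1} with trivial context, $X=H_t$ and $F=W_{t+1:\infty}$. The space $\mathcal W^{\mathbb N}$ is standard Borel, so the countable determining-class argument applies without change. It gives $W_{t+1:\infty}\perp\!\!\!\perp H_t\mid\Sigma_t$. A history statistic $T_t=\rho_t(H_t)$ is trivially admissible, so the minimality clause yields $\Sigma_t=g_t(T_t)$ a.s. with $g_t(\tau)=P_{W_{t+1:\infty}\mid T_t}(\cdot\mid\tau)$. The only check is that a deterministic function of $H_t$ satisfies the Markov chain in Definition~\ref{def:4}.

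\emph{Part 2.} This is the step I expect to be the main obstacle, because it requires identifying the disintegration $\Phi$, fixed once as a universal kernel, with the actual conditional law of the future given $H_{t+1}$. I would fix a measurable cylinder set $B\subseteq\mathcal W^{\mathbb N}$ and a measurable $A\subseteq\mathcal W$. By the defining property of $\Phi$ applied to $\sigma=\Sigma_t(\omega)$,
\[
\Sigma_t(A\times B)=\int_A \Phi(\Sigma_t,w)(B)\,(\Sigma_t)_1(dw).
\]
Since $(\Sigma_t)_1$ is the conditional law of $W_{t+1}$ given $H_t$, for every $G\in\mathcal H_t$ this gives
\[
\mathbb E\bigl[\mathbf 1_G\mathbf 1_A(W_{t+1})\Phi(\Sigma_t,W_{t+1})(B)\bigr]=P\bigl(G\cap\{W_{t+1}\in A,\ W_{t+2:\infty}\in B\}\bigr).
\]
Making this step rigorous needs joint measurability of $\Phi$, so that the composition is $\mathcal H_{t+1}$-measurable, together with a standard disintegration and Fubini argument for kernels. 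Sets of the form $G\cap\{W_{t+1}\in A\}$ form a $\pi$-system generating $\mathcal H_{t+1}$. A monotone-class argument then shows that $\Phi(\Sigma_t,W_{t+1})(B)$ is a version of $P(W_{t+2:\infty}\in B\mid\mathcal H_{t+1})=\Sigma_{t+1}(B)$. The arbitrary values of $\Phi$ on $(\Sigma_t)_1$-null observations only affect a $P$-null set. Taking a countable determining class of $B$'s upgrades the equality to equality of measures almost surely.

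\emph{Part 3.} For the Markov property, I would condition on $\mathcal H_t$, which contains $\sigma(\Sigma_0,\dots,\Sigma_t)$. By Part 2,
\[
P(\Sigma_{t+1}\in B\mid\mathcal H_t)=\int \mathbf 1\{\Phi(\Sigma_t,w)\in B\}\,(\Sigma_t)_1(dw)=K(\Sigma_t,B),
\]
where the second equality again uses that $(\Sigma_t)_1$ is the conditional law of $W_{t+1}$ given $\mathcal H_t$. The kernel $K$ is measurable in $\sigma$ because $\Phi$ is jointly measurable and $\sigma\mapsto\sigma_1$ is measurable. The tower property down to $\sigma(\Sigma_{0:t})$ then gives the Markov property with kernel $K$.

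\emph{Conclusion.} Parts 1 and 2 show that $\Sigma_t$ satisfies Definition~\ref{def:9}, with $\Psi_t=\Phi$. Every recursive predictive representation is in particular full-future sufficient, so it factors onto $\Sigma_t$ by Part 1.
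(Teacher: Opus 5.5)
Your proposal is correct and follows essentially the same route as the paper. Part~1 specializes Theorem~\ref{thm:1} with $X=H_t$, $F=W_{t+1:\infty}$ and trivial context; Part~2 identifies $\Phi(\Sigma_t,W_{t+1})$ with the conditional tail law given $\mathcal H_{t+1}$; and Part~3 integrates $\mathbf 1\{\Phi(\Sigma_t,w)\in B\}$ against $(\Sigma_t)_1$ and then conditions down to $\sigma(\Sigma_0,\ldots,\Sigma_t)\subseteq\mathcal H_t$. The only difference is that you spell out, through the freezing lemma and a $\pi$-system argument over sets $G\cap\{W_{t+1}\in A\}$, the step the paper compresses into an appeal to uniqueness of regular conditional laws, which makes the role of the joint measurability of $\Phi$ explicit.
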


\begin{proof}
Full-future sufficiency and minimality follow directly from Theorem~\ref{thm:1} by taking
\[
X=H_t,
\qquad
F=W_{t+1:\infty},
\]
with trivial shared context.

For recursive updating, conditioned on $H_t$, $\Sigma_t$ is the law of $(W_{t+1},W_{t+2},\ldots)$. After observing $W_{t+1}$, the disintegration kernel $\Phi$ gives the conditional law of the remaining tail. Hence, by uniqueness of regular conditional laws,
\[
\Sigma_{t+1}
=
\Phi(\Sigma_t,W_{t+1})
\qquad\text{almost surely}.
\]

Finally, conditioned on $\mathcal H_t$, the next observation has law $(\Sigma_t)_1$, and the next state is $\Phi(\Sigma_t,W_{t+1})$. Joint measurability of $\Phi$ makes $K$ a Markov kernel, and for every measurable state set $B$,
\[
\begin{aligned}
P(\Sigma_{t+1}\in B\mid\mathcal H_t)
&=
\int_{\mathcal W}
\mathbf 1
\bigl\{
\Phi(\Sigma_t,w)\in B
\bigr\}
(\Sigma_t)_1(dw)\\
&=
K(\Sigma_t,B).
\end{aligned}
\]
The right-hand side depends on the past only through $\Sigma_t$. Conditioning again on $\sigma(\Sigma_0,\ldots,\Sigma_t)\subseteq\mathcal H_t$ establishes the Markov representation.
\end{proof}

The two states therefore solve different problems. The finite-horizon state is minimal for exact preservation of the current $k$-step predictive law, whereas the full-future state is minimal for preserving the entire future law and is recursively closed under new observations. Fixed-target predictive sufficiency and recursive sequential sufficiency are therefore distinct requirements.

\section{Conclusion}

This paper starts from a question that precedes distortion design and
coding: when communication serves prediction, what should be preserved?
For a specified prediction target and shared context, we identify the
source-induced predictive state
$S_X=P_{F\mid X,C}$ as the fidelity object for exact predictive preservation. It
identifies source realizations by the predictive distribution they induce
and is minimal among admissible representations preserving that law.
This makes fidelity-object selection an explicit first step of
prediction-oriented communication, rather than treating the raw source
as the default object of preservation.

Once this principle for selecting the fidelity object is fixed, the rest of
the framework follows. Terminal
prediction loss induces predictive fidelity, with logarithmic loss
yielding an exact accounting of predictive information lost through
communication. Source- and receiver-side predictive states separate
communication, model-family, and deployment losses. Under matched
memoryless compression, raw-source distinctions beyond the predictive
quotient provide no rate-distortion gain. In sequential prediction,
fixed-horizon sufficiency need not imply recursive closure, motivating
the full-future predictive state for recursive evolution.

The central message is therefore a change in where communication design
begins. For prediction-oriented and AI-native systems, the first question
is not how accurately to reconstruct the source, nor even which
distortion to optimize, but which source distinctions matter to the
terminal objective in the first place. Distortion, compression,
transmission, and receiver design come after that choice. The shift from
source reconstruction to predictive-state preservation therefore begins
with a change in the first question: before asking how to preserve
information, first ask what information needs to be preserved.

\bibliographystyle{IEEEtran}
\bibliography{references}

\end{document}